\newcommand{\R}{\mathbb{R}}
\newcommand{\C}{\mathbb{C}}
\newcommand{\E}{\mathbb{E}}
\newcommand{\ket}[1]{| #1 \rangle}
\newcommand{\bra}[1]{\langle #1|}
\newcommand{\hsip}[2]{\langle #1,#2 \rangle}
\newcommand{\proj}[1]{| #1 \rangle \langle #1 |}
\DeclareMathOperator{\tr}{tr}
\DeclareMathOperator{\spann}{span}
\DeclareMathOperator{\Ent}{Ent}
\newcommand{\be}{\begin{equation}}
\newcommand{\ee}{\end{equation}}
\newcommand{\bea}{\begin{eqnarray}}
\newcommand{\eea}{\end{eqnarray}}
\newcommand{\bes}{\begin{equation*}}
\newcommand{\ees}{\end{equation*}}
\newcommand{\beas}{\begin{eqnarray*}}
\newcommand{\eeas}{\end{eqnarray*}}
\newtheorem*{rep@theorem}{\rep@title}
\newcommand{\newreptheorem}[2]{%
\newenvironment{rep#1}[1]{%
 \def\rep@title{#2 \ref{##1} (restated)}%
 \begin{rep@theorem}}%
 {\end{rep@theorem}}}
\newtheorem{thm}{Theorem}
\newtheorem*{thm*}{Theorem}
\newtheorem{cor}[thm]{Corollary}
\newtheorem{lem}[thm]{Lemma}
\newtheorem*{lem*}{Lemma}
\begin{document}

\title{Quantum reverse hypercontractivity}
\author{Toby Cubitt$^{1,2}$, Michael Kastoryano$^3$,
Ashley Montanaro$^4$ and Kristan Temme$^5$\\[8pt]
{\small $^1$ Department of Computer Science, University College London, UK} \\
{\small $^2$ Centre for Quantum Information and Foundations, DAMTP, University of Cambridge, UK} \\
{\small $^3$ NBIA, Niels Bohr Institute, University of Copenhagen, 2100 Copenhagen, DK}\\
{\small $^4$ School of Mathematics, University of Bristol, UK}\\
{\small $^5$ Institute for Quantum Information and Matter, California Institute of Technology,}\\
{\small Pasadena CA 91125, USA}
}

\date{\today}

\maketitle

\begin{abstract}
We develop reverse versions of hypercontractive inequalities for quantum channels. By generalizing classical techniques, we prove a reverse hypercontractive inequality for tensor products of qubit depolarizing channels. We apply this to obtain a rapid mixing result for depolarizing noise applied to large subspaces, and to prove bounds on a quantum generalization of non-interactive correlation distillation.
\end{abstract}


\section{Introduction}

The theory of hypercontractivity has become an essential tool in disciplines ranging from quantum field theory~\cite{gross06}, to theoretical computer science~\cite{dewolf08}, to quantum information theory~\cite{montanaro12}. One of the simplest and most well-known, yet also most important, results in this area is hypercontractivity of a certain noise operator acting on the boolean cube $\{0,1\}^n$. The noise operator $T_\gamma$ is defined by $(T_\gamma f)(x) = \E_{y \sim_\epsilon x}[f(y)]$ for functions $f:\{0,1\}^n \rightarrow \R$, where $y$ is distributed such that each bit of $y$ is equal to the corresponding bit of $x$, except with independent probability $\epsilon = (1-\gamma)/2$. Let $\|f\|_p$ be the normalized $\ell_p$ norm, $\|f\|_p = \left(\frac{1}{2^n}\sum_{x \in \{0,1\}^n} |f(x)|^p \right)^{1/p}$. Then it is easy to show that, for any $p \ge 1$ and any $\gamma \in [0,1]$, $T_\gamma$ is a contraction: $\|T_\gamma (f) \|_p \le \|f\|_p$.

However, a stronger result holds:

\begin{thm}[Hypercontractivity for the boolean cube~\cite{bonami70,gross75,beckner75}]
\label{thm:forwardclassical}
Let $f:\{0,1\}^n \rightarrow \R$ and fix $p$ and $q$ such that $1 \le p \le q \le \infty$. Then, for all $\gamma$ such that $0 \le \gamma \le \sqrt{(p-1)/(q-1)}$,
\[ \| T_\gamma(f) \|_q \le \|f\|_p. \]
\end{thm}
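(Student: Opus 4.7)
The standard route to Theorem~\ref{thm:forwardclassical} is a two-step argument: (i) a one-variable (``two-point'') inequality, which is the case $n=1$, and (ii) a tensorization step lifting the single-bit bound to arbitrary $n$ using only the fact that $T_\gamma$ acts as a convex combination on $\{0,1\}$. My plan is to do the tensorization first, since it is the cleaner half, and then attack the two-point inequality.

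For tensorization I argue by induction on $n$ and factor $T_\gamma^{\otimes n} = T_\gamma^{(1)}\,T_\gamma^{\otimes (n-1)}$, where $T_\gamma^{(1)}$ acts on the first coordinate. Set $h := T_\gamma^{\otimes(n-1)} f$ and $\phi(x_1) := \|h(x_1,\cdot)\|_q$, the normalized $\ell_q$ norm over the last $n-1$ coordinates. Because $T_\gamma^{(1)}$ is a convex combination, Minkowski's inequality gives, for each $x_1$,
\[
\bigl\|(T_\gamma^{(1)} h)(x_1,\cdot)\bigr\|_q \;\le\; (T_\gamma^{(1)}\phi)(x_1).
\]
Taking the outer $\ell_q$ norm in $x_1$ turns this into $\|T_\gamma^{\otimes n} f\|_q \le \|T_\gamma^{(1)}\phi\|_q$, and applying the one-dimensional case to $\phi$ gives $\|T_\gamma^{(1)}\phi\|_q \le \|\phi\|_p$. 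Finally, the inductive hypothesis yields $\phi(x_1)\le \|f(x_1,\cdot)\|_p$ pointwise, and by Fubini $\|\|f(x_1,\cdot)\|_p\|_{L^p(x_1)} = \|f\|_p$, closing the induction.

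For the two-point inequality, I parametrize a function on $\{0,1\}$ by $f(0)=a+b$, $f(1)=a-b$, so that $T_\gamma f$ is parametrized by $(a,\gamma b)$. The claim reduces to
\[
\left(\frac{|a+\gamma b|^q+|a-\gamma b|^q}{2}\right)^{1/q}
\;\le\;
\left(\frac{|a+b|^p+|a-b|^p}{2}\right)^{1/p}.
\]
By homogeneity I may normalize $a=1$ and restrict to $|b|\le 1$ (the remaining case $a<|b|$ follows by symmetry and continuity). A second-order Taylor expansion in $b$ at $b=0$ matches both sides to order $b^0$ and shows that the order $b^2$ coefficients satisfy the inequality precisely when $(q-1)\gamma^2 \le p-1$, which is the hypothesis and pins down sharpness.

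Promoting this infinitesimal comparison to the global bound is the main obstacle. I see two standard routes. The first is a direct calculus argument: fix $a=1$, regard the right-hand side raised to $q/p$ minus the left-hand side raised to $q$ as a function of $b$ on $[0,1]$, and show by differentiation that it is non-negative, using the constraint $(q-1)\gamma^2 \le p-1$ exactly where the $b^2$ coefficients cancel. The second, which I expect to be cleaner, is to differentiate along the semigroup $T_{e^{-t}}$: one shows that $\frac{d}{dt}\|T_{e^{-t}} f\|_{p(t)}$ vanishes along a suitably chosen curve $p(t)$, which is equivalent to Gross's logarithmic Sobolev inequality on $\{0,1\}$. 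Since the log-Sobolev inequality itself tensorizes, this route also subsumes step (ii), at the cost of needing to verify the two-point log-Sobolev inequality as the replacement hard step.
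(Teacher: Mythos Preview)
The paper does not prove Theorem~\ref{thm:forwardclassical}; it is stated as a classical background result with citations to Bonami, Gross and Beckner, and no argument is given. So there is no ``paper's proof'' to compare against.

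Your outline is the standard one and the tensorization half is carried out correctly: the Minkowski step pushes $T_\gamma^{(1)}$ through the inner $\ell_q$ norm, the one-dimensional bound then drops the outer norm from $q$ to $p$, and induction plus Fubini closes it. One small quibble: your reduction to $a=1$, $|b|\le 1$ is too breezy. After normalizing $a=1$ and $b\ge 0$, the regime $b>1$ (where $f$ changes sign) does not follow from ``symmetry and continuity'' alone; you need either to handle it directly or, as in several textbook proofs, to first reduce to nonnegative $f$ and thus to $0\le b\le a$.

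The real gap is that you do not actually prove the two-point inequality; you give a Taylor calculation showing sharpness and then list two possible routes. Both routes are viable and standard, but neither is executed. The direct calculus route requires a somewhat delicate monotonicity argument (differentiating once is not enough; one typically differentiates twice and controls the sign using $(q-1)\gamma^2\le p-1$), and the log-Sobolev route still leaves the two-point log-Sobolev inequality to be verified. As written, this is a correct plan but not a proof.
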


Theorem \ref{thm:forwardclassical} is the key technical result in applications to computer science such as the famous result of Kahn, Kalai and Linial that every boolean function has an influential variable~\cite{kahn88}. One can generalize Theorem \ref{thm:forwardclassical} to a quantum (i.e.\ noncommutative) setting. Here the natural generalization of the noise operator $T_\gamma$ acting on $n$ bits is the qubit depolarizing channel $\mathcal{D}_\gamma$ acting on $n$ qubits, applied to each qubit independently. The qubit depolarizing channel is defined by
\[ \mathcal{D}_\gamma(f) = (1-\gamma) (\tr f) \frac{I}{2} + \gamma f, \]
where $f \in \mathcal{M}_2$ is a linear operator acting on the space of one qubit. The failure of multiplicativity for maximum output $p$-norms of quantum channels implies that it is more challenging to study tensor products of quantum channels than in the classical case. However, the following generalization of Theorem \ref{thm:forwardclassical} was proven by King~\cite{king14}, generalizing a previous result of Montanaro and Osborne~\cite{montanaro10c}. Related special cases were previously shown by Carlen and Lieb~\cite{carlen93} and Biane~\cite{biane97a}; see~\cite{king14} for a discussion.

\begin{thm}[Quantum hypercontractivity for the depolarizing channel~\cite{king14}]
\label{thm:forwardquantum}
Let $f \in \mathcal{M}_{2^n}$ and fix $p$ and $q$ such that $1 \le p \le q \le \infty$. Then, for all $\gamma$ such that $0 \le \gamma \le \sqrt{(p-1)/(q-1)}$,
\[ \| \mathcal{D}_\gamma^{\otimes n}(f) \|_q \le \|f\|_p. \]
\end{thm}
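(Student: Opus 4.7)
The plan is to follow the classical Gross program in its noncommutative form: prove a quantum logarithmic Sobolev inequality (LSI) for the one-qubit depolarizing semigroup with the optimal constant, tensorize it to $n$ qubits (using that the summands of the multi-qubit generator commute), and then integrate along the semigroup flow to recover the hypercontractive bound. The key reparametrization is $\gamma = e^{-t}$, which identifies $\mathcal{D}_\gamma = e^{-tL}$ for a single-qubit Lindbladian $L$, and $\mathcal{D}_\gamma^{\otimes n} = e^{-tL_n}$ with $L_n = \sum_{i=1}^n L^{(i)}$ and $[L^{(i)},L^{(j)}] = 0$.

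First I would prove the single-qubit LSI directly. Since $\mathcal{D}_\gamma$ fixes $I$ and scales each of $\sigma_X,\sigma_Y,\sigma_Z$ by $\gamma$, the generator has a clean spectral decomposition, and any $f \in \mathcal{M}_2$ can be parametrized as $f = aI + \vec{b}\cdot\vec{\sigma}$ with eigenvalues $a \pm |\vec b|$. This turns the Schatten norm into $\|f\|_p^p = \tfrac{1}{2}(|a+|\vec b||^p + |a-|\vec b||^p)$, exactly the $p$-norm of a function on a two-point space, so the one-qubit noncommutative LSI collapses to the classical two-point LSI whose sharp constant is known. This gives the optimal one-qubit log-Sobolev constant, in turn tied to the claimed exponent $\sqrt{(p-1)/(q-1)}$.

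Next I would tensorize: for commuting local generators $L^{(i)}$, additivity of the noncommutative entropy along a telescoping chain of conditional expectations onto each qubit promotes the one-qubit LSI to an $n$-qubit LSI with the same constant. Gross's integration argument then yields $\|e^{-tL_n}(f)\|_{q(t)} \le \|f\|_p$ with $q(t)-1 = e^{2\alpha t}(p-1)$, which under the identification $\gamma = e^{-t}$ is exactly the claimed range $\gamma \le \sqrt{(p-1)/(q-1)}$.

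The main obstacle, and the place where the genuinely quantum difficulty is concentrated, is tensorization. Ordinary Schatten $p$-norms are not multiplicative under tensor products of quantum channels, so one cannot simply iterate a single-qubit hypercontractive bound as in the classical proof. The LSI route sidesteps this through entropy additivity, but this requires picking the right noncommutative Dirichlet form and conditional expectations and proving a quantum chain rule; an alternative (essentially King's route) is to upgrade the one-qubit statement to a completely bounded inequality $\|\mathcal{D}_\gamma\|_{p\to q,\,cb} \le 1$, since CB norms \emph{do} tensorize. Either way, the single-qubit inequality is essentially classical, and the nontriviality lies in passing from one qubit to $n$.
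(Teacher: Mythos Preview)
The paper does not prove Theorem~\ref{thm:forwardquantum}; it is quoted as a known result of King~\cite{king14} (with predecessors in~\cite{carlen93,biane97a,montanaro10c}) and serves only as background for the reverse inequality, which is the paper's actual contribution. There is therefore no in-paper proof to compare your proposal against.

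On the proposal itself: the outline is broadly correct and you correctly identify tensorization as the crux. Two places are stated too casually. First, the assertion that commuting local generators plus ``additivity of noncommutative entropy along a telescoping chain of conditional expectations'' promotes the one-qubit 2-LSI to an $n$-qubit 2-LSI with the same constant is not automatic in the quantum setting the way it is classically; for the depolarizing semigroup this was established in~\cite{kastoryano13}, but it is a result one must invoke, not a triviality. Second, Gross's integration argument needs not just the 2-LSI but the $p$-LSI across the range (equivalently, a quantum $\mathbb{L}_p$-regularity/Stroock--Varopoulos inequality); this is exactly the ingredient King supplied for $1\le p\le 2$ and unital qubit channels, and what the present paper's Lemma~\ref{lem:qstroock} extends to $p\in(0,1)$ for the reverse regime. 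Your final paragraph already flags the CB-norm alternative that genuinely tensorizes and is closer to King's actual argument, so you are aware of both routes; just be explicit that the LSI route for $n$ qubits is not self-contained from the one-qubit calculation alone.
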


In this theorem $\|\cdot\|_p$ is the normalized $\mathbb{L}_p$ norm (Schatten $p$-norm), $\|f\|_p = \left(\frac{1}{2^n} \tr |f|^p \right)^{1/p}$, where $|f| = \sqrt{f^\dag f}$. Theorem \ref{thm:forwardquantum} has found its own applications, to spectral bounds for local Hamiltonians~\cite{montanaro12} and bounds on mixing times for the depolarizing channel~\cite{kastoryano13}.

One can also define the $\ell_p$ and $\mathbb{L}_p$ ``norms'' for any $p < 1$, though in this case these functions are no longer actually norms. A result similar in appearance to Theorem \ref{thm:forwardclassical}, but perhaps less well-known, was proven by Borell~\cite{borell82} for the case where $p$ and $q$ are less than 1:
\begin{thm}[Reverse hypercontractivity for the boolean cube~\cite{borell82}]
\label{thm:reverseclassical}
Let $f:\{0,1\}^n \rightarrow \R$ be non-negative and fix $p$ and $q$ such that $-\infty < q \le p \le 1$ (if $p < 0$, assume that $f$ is strictly positive). Then, for all $\gamma$ such that $0 \le \gamma \le \sqrt{(1-p)/(1-q)}$,
\[ \| T_\gamma(f) \|_q \ge \|f\|_p. \]
\end{thm}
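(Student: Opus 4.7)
The plan is to follow Borell's original two-step strategy, which parallels the standard proof of Theorem~\ref{thm:forwardclassical}: (i)~establish the $n=1$ case, a so-called two-point inequality, and (ii)~tensorise to the general case.

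\textbf{Step 1 (Two-point inequality).} For $a,b\ge 0$ (strictly positive when $p<0$), $\epsilon=(1-\gamma)/2$, $q\le p\le 1$ and $\gamma\le\sqrt{(1-p)/(1-q)}$, I would show
\[
\left(\tfrac{1}{2}\bigl((1-\epsilon)a+\epsilon b\bigr)^{q}+\tfrac{1}{2}\bigl(\epsilon a+(1-\epsilon)b\bigr)^{q}\right)^{1/q}\ \ge\ \left(\tfrac{1}{2}a^{p}+\tfrac{1}{2}b^{p}\right)^{1/p}.
\]
By homogeneity, normalise $a=1$ and study the inequality as a function of $b\in(0,\infty)$. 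Equality holds at $b=1$; matching first and second derivatives there shows the critical constraint is precisely $\gamma^{2}(1-q)\le 1-p$, and a direct convexity/calculus argument (e.g.\ in the variable $\log b$, after carefully isolating signs when $p$ or $q$ is negative) extends the inequality to all admissible $b$.

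\textbf{Step 2 (Tensorisation).} Given the base case, the general statement follows by induction on $n$. Splitting $T_\gamma^{\otimes n}=T_\gamma^{(1)}\otimes T_\gamma^{\otimes(n-1)}$ and freezing the values of bits $2,\dots,n$, the single-bit inequality lower-bounds the $\ell_q$-norm in bit~$1$ by the $\ell_p$-norm; the inductive hypothesis then handles the remaining $n-1$ bits. Gluing these together requires interchanging an outer $\ell_q$ norm with an inner $\ell_p$ norm in the direction opposite to the forward case. In the forward case one invokes Minkowski's integral inequality; here the correct tool is the \emph{reverse} Minkowski integral inequality, which holds for non-negative functions precisely in the regime $p,q\le 1$ relevant to Theorem~\ref{thm:reverseclassical}.

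\textbf{Main obstacle.} The two-point inequality is the technically demanding step. In the forward setting one can invoke Gaussian hypercontractivity via the central limit theorem, or equivalent log-Sobolev arguments, for a slick proof. The reverse case is more delicate because $\|\cdot\|_p$ is not actually a norm for $p<1$, Hölder-type duality arguments reverse in a sign-dependent way, and when $p<0$ one must track strict positivity carefully near the boundary $b=0$. A careful differential-inequality analysis in the style of Borell~\cite{borell82} is nevertheless sufficient to establish the sharp constant $\sqrt{(1-p)/(1-q)}$, after which tensorisation is essentially formal.
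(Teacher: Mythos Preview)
Your outline is essentially Borell's original argument (as streamlined in~\cite{mossel06}) and it is correct for Theorem~\ref{thm:reverseclassical}. Note, however, that the paper does not give its own proof of this theorem: it is quoted from the literature as background. What the paper does prove is the quantum generalisation, Theorem~\ref{thm:reversequantumdep}, and there the route is genuinely different from yours. Rather than a two-point inequality followed by tensorisation, the paper follows Mossel, Oleszkiewicz and Sen~\cite{mossel13a} and works entirely through log-Sobolev inequalities: a quantum Stroock--Varopoulos inequality (Lemma~\ref{lem:qstroock}) upgrades the $2$-log-Sobolev inequality to $p$-log-Sobolev inequalities for all $p\in(0,2]$ (Corollary~\ref{cor:grossgen} and Lemma~\ref{lem:plogsob}), and then differentiating $q\mapsto\|\mathcal{T}_{t(q)}f\|_q$ along the path $t(q)=\tfrac{\alpha}{4}\ln\tfrac{1-q}{1-p}$ (Lemma~\ref{lem:derivative}) yields the reverse hypercontractive estimate directly (Theorem~\ref{thm:logsob2p}). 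The depolarizing/boolean-cube case then drops out from the known $2$-log-Sobolev constant $\alpha=2$, with the ranges $q<0\le p$ and $q<p<0$ handled separately by a splitting argument and by duality via Lemma~\ref{lem:varchar}.

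The trade-off between the two approaches is clear. Your two-point-plus-tensorisation argument is more elementary and delivers the sharp constant with minimal machinery, but it leans on the commutativity of $L^p(\{0,1\}^n)$: in the noncommutative setting there is no workable analogue of ``freeze all coordinates but one and apply the single-site inequality pointwise'', and the reverse Minkowski swap has no clean matrix version when the pieces do not commute. The paper's semigroup route avoids tensorisation altogether by using that the $2$-log-Sobolev inequality is already known for the full $n$-qubit generator~\cite{kastoryano13}; the price is the Stroock--Varopoulos machinery, but that is exactly what allows the argument to extend to the quantum case.
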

Observe that the inequality in Theorem \ref{thm:reverseclassical} is reversed as compared with Theorem \ref{thm:forwardclassical}. Theorem \ref{thm:reverseclassical} has also found applications in classical computer science, such as isoperimetric inequalities, bounds on correlation distillation and quantitative versions of Arrow's theorem~\cite{mossel06,mossel11,mossel13a}.

Here we generalize Theorem \ref{thm:reverseclassical} to the quantum setting, proving the following result:
\begin{thm}
\label{thm:reversequantumdep}
Let $f \in \mathcal{M}_{2^n}$ be positive semidefinite and let $-\infty < q \le p \le 1$ (if $p < 0$, assume that $f$ is positive definite). Then, for all $\gamma$ such that $0 \le \gamma \le \sqrt{(1-p)/(1-q)}$,
\[ \| \mathcal{D}_\gamma^{\otimes n}(f) \|_q \ge \|f\|_p. \]
\end{thm}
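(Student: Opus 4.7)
The strategy is to follow the structure of Borell's classical proof of Theorem~\ref{thm:reverseclassical}, adapted to the quantum setting with techniques in the spirit of King's proof of the forward quantum case (Theorem~\ref{thm:forwardquantum}). The proof proceeds in two main steps: a single-qubit base case that reduces to the classical inequality, and a non-commutative tensorization argument.

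\emph{Single-qubit base case.} The key observation is that $\mathcal{D}_\gamma$ is unitarily covariant: for any unitary $U$, $\mathcal{D}_\gamma(U \rho U^\dag) = U \mathcal{D}_\gamma(\rho) U^\dag$, since $\mathcal{D}_\gamma$ is a convex combination of the identity and the completely depolarizing channel $\rho \mapsto \tr(\rho)I/2$, both of which commute with unitary conjugation. Given a positive semidefinite $f \in \mathcal{M}_2$, diagonalise $f = U D U^\dag$ with $D \ge 0$ diagonal. By unitary invariance of the Schatten norms, $\|f\|_p = \|D\|_p$ and $\|\mathcal{D}_\gamma(f)\|_q = \|\mathcal{D}_\gamma(D)\|_q$. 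On diagonal matrices, $\mathcal{D}_\gamma$ acts identically to the classical noise operator $T_\gamma$ on a single bit, so the $n=1$ case follows immediately from the $n=1$ case of Theorem~\ref{thm:reverseclassical}.

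\emph{Tensorization.} I would prove the full $n$-qubit statement by induction on $n$, factoring $\mathcal{D}_\gamma^{\otimes n} = (\mathcal{D}_\gamma \otimes \mathrm{id}_{2^{n-1}}) \circ (\mathrm{id}_2 \otimes \mathcal{D}_\gamma^{\otimes(n-1)})$. The central input is an ampliated (matrix-valued) version of the single-qubit inequality: for every $k$ and every positive semidefinite $g \in \mathcal{M}_2 \otimes \mathcal{M}_k$,
\[
\|(\mathcal{D}_\gamma \otimes \mathrm{id}_k)(g)\|_q \;\ge\; \|g\|_p \quad \text{whenever} \quad 0 \le \gamma \le \sqrt{(1-p)/(1-q)}.
\]
To prove this, write $g$ as a $2 \times 2$ block matrix with entries in $\mathcal{M}_k$ and compute $(\mathcal{D}_\gamma \otimes \mathrm{id}_k)(g)$: the two diagonal blocks undergo the classical two-point mixing, while the off-diagonal blocks are simply multiplied by $\gamma$. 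The inequality should then follow by combining the classical two-point reverse hypercontractive inequality on the diagonal with operator concavity of $A \mapsto \tr A^p$ for $p \in (0,1)$ (respectively convexity for $p<0$), together with a non-commutative reverse Minkowski-type inequality for Schatten $p$-norms of block matrices to control the off-diagonal contribution. Iterating this ampliated inequality across the $n$ qubits then yields the theorem; at each step $\mathcal{D}_\gamma^{\otimes n}(f)$ remains positive semidefinite since $\mathcal{D}_\gamma$ is a quantum channel.

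\emph{Main obstacle.} The most delicate point is the ampliated single-qubit inequality with the sharp range of $\gamma$. Because $\|\cdot\|_p$ is not a norm for $p<1$, standard operator-space interpolation techniques do not apply directly, and we must rely on non-commutative analogues of reverse Minkowski that may a priori cost constants or shrink the admissible range of $\gamma$. An additional subtlety is the $p<0$ regime: here $f$ must be assumed strictly positive definite so that $f^p$ is well-defined, and many of the underlying operator inequalities (operator concavity/convexity, Lieb-type concavity) reverse direction. Tracking these sign flips carefully while preserving the sharp threshold $\gamma = \sqrt{(1-p)/(1-q)}$ is, I expect, where the bulk of the technical work will lie.
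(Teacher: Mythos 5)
Your approach is genuinely different from the paper's, and it contains a substantial gap precisely at the step you flag as ``the most delicate point.''

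The paper does not tensorize a single-qubit hypercontractive inequality at all. Instead it runs the standard log-Sobolev machine: a quantum Stroock--Varopoulos inequality (Lemma~\ref{lem:qstroock}) yields a quantum Gross lemma (Corollary~\ref{cor:grossgen}), which converts a $2$-log-Sobolev inequality into a family of $p$-log-Sobolev inequalities for $p\in(0,2]$; then a norm-derivative computation (Lemma~\ref{lem:derivative}) turns these into the reverse hypercontractive estimate (Theorem~\ref{thm:logsob2p}), with the cases $p<0$ handled by duality via the variational characterisation of $\|\cdot\|_p$. Crucially, the difficulty of the noncommutative tensor product is absorbed into the known fact that the $2$-log-Sobolev constant $\alpha=2$ for the qubit depolarizing generator is dimension-free under tensor powers~\cite{kastoryano13}. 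This is the cleanest available route precisely because it never requires a quantum analogue of the classical Minkowski-integral tensorization step.

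Your plan, by contrast, stands or falls on the ampliated single-qubit estimate
\[
\|(\mathcal{D}_\gamma \otimes \mathrm{id}_k)(g)\|_q \ge \|g\|_p, \qquad 0 \le \gamma \le \sqrt{(1-p)/(1-q)},
\]
for arbitrary $k$. (Granting it, the $n$-qubit theorem would indeed follow, since by Lemma~\ref{lem:expansion} unital channels are expansive for $p<1$: apply the ampliated inequality to $g = (\mathrm{id}\otimes\mathcal{D}_\gamma^{\otimes(n-1)})(f)$ and then use $\|g\|_p \ge \|f\|_p$.) The problem is that this ampliated inequality is not a routine extension of your $n=1$ base case and you do not prove it. For $k>1$ there is no unitary simultaneously diagonalising $g$ and commuting with $\mathcal{D}_\gamma\otimes\mathrm{id}_k$, so the reduction to the classical two-point operator is unavailable. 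The noncommutative analogue of Minkowski's integral inequality, which drives the classical tensorization, simply does not exist in the form you need: for $p\ne 1,2$ the Schatten $p$-``norm'' of a block operator is not controlled by the $p$-norms of its blocks, and known block-matrix (reverse) Minkowski-type bounds lose constants and would destroy the sharp threshold $\gamma = \sqrt{(1-p)/(1-q)}$. This is the same obstruction that makes King's proof of Theorem~\ref{thm:forwardquantum} a substantial piece of work rather than a corollary of the classical theorem; in the forward case the ampliated inequality is the theorem. Your sketch --- ``combine the classical two-point inequality on the diagonal with operator concavity of $\tr A^p$ and a reverse Minkowski for block matrices'' --- is a plausible-looking shopping list, but none of those ingredients, alone or in the obvious combination, yields the estimate with the sharp range of $\gamma$, and you concede as much. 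Until the ampliated inequality is actually established, the proposal is a restatement of the difficulty rather than a proof of the theorem.

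Two smaller remarks. First, for $p<0$ your base case needs $f$ strictly positive definite, but after diagonalisation you should also check that $\mathcal{D}_\gamma(D)$ stays strictly positive so that the $q$-``norm'' (with $q<p<0$) is well-defined; this is fine for $\gamma<1$ but worth stating. Second, even in the range $0<q\le p\le 1$ your use of ``operator concavity of $\tr A^p$'' must be paired with the fact that the off-diagonal blocks are scaled by $\gamma$ rather than mixed classically, and it is exactly the interaction between that scaling and the trace power that has no classical counterpart and no known elementary treatment.
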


We apply Theorem \ref{thm:reversequantumdep} to obtain quantum equivalents of some of the main classical applications of reverse hypercontractivity~\cite{mossel06,mossel13a}. First, we obtain a ``rapid mixing'' result for large subspaces. Imagine that $S$ and $T$ are large subspaces of the space of $n$ qubits. One special case of our result is that, if we take the uniform mixture on $S$ and apply a certain amount of depolarizing noise to each qubit, the resulting state will have quite a large overlap with $T$. Second, we study a quantum generalization of the problem of non-interactive correlation distillation~\cite{mossel05,mossel06,mossel13a}. Classically, this is a game with $k$ players, where a copy of a string of $n$ random bits is distributed to each player, with independent noise applied to each copy. The goal of the players is to output a single, shared, uniformly random bit. Here we generalize this by replacing the random bits with a random pure state of $n$ qubits picked from a non-product basis, and put limits on the success probability of the players in this more general setting.


\section{Preliminaries}

We let $\mathcal{M}_d$ denote the set of complex $d \times d$ matrices. We write $\mathcal{T}_t: \mathcal{M}_d \rightarrow \mathcal{M}_d$ for a general Markovian family of quantum channels generated by a Lindblad operator $\mathcal{L}$:
\[ \mathcal{T}_t = e^{-t \mathcal{L}}. \]
The important special case of the qubit depolarizing channel fits into this picture. We can write $\mathcal{D}_\gamma = e^{-t \mathcal{L}}$, where $\gamma = e^{-t}$ and $\mathcal{L}(f) = f - (\tr f) \frac{I}{2}$.

Let $\tau$ be the normalised trace, $\tau(f) = \tr f / d$ for $f \in \mathcal{M}_d$. Further define $\hsip{f}{g} = \tau(f^\dag g)$. The entropy of $f$ is defined as
\[ \text{Ent}(f) = \tau(f \ln f) - \tau(f)\ln\tau( f). \]
%
For $\mathcal{L}$ the generator of a semigroup, the associated Dirichlet form is
\[ \mathcal{E}_\mathcal{L}(f,g) = \tau(f \mathcal{L}(g)). \]
For $p \in \R$, the H\"older conjugate $p'$ is defined by $1/p + 1/p' = 1$.

A quantum channel (completely positive, trace-preserving map) $\mathcal{T}:\mathcal{M}_d \rightarrow \mathcal{M}_d$ is said to be unital if $\mathcal{T}(I)= I$. $\mathcal{T}$ is said to be primitive if for all density matrices $\rho$ the limit $\lim_{k \rightarrow \infty} \mathcal{T}^k(\rho)$ exists and is equal to some $\sigma>0$ independent of $\rho$. A unital channel $\mathcal{T}$ is said to be \textit{reversible} if $\tau(f \mathcal{T}(g))=\tau(\mathcal{T}(f) g)$ for all $f,g \in \mathcal{M}_d$, i.e.\ the map is Hermitian with respect to the Hilbert-Schmidt inner product; equivalently $\mathcal{T}$'s Kraus operators are Hermitian. (Note that ``reversible'' is the standard terminology for this property in classical Markov chains, which in our context is the special case in which $\mathcal{T}$ is a classical channel (stochastic map). See e.g.\ \cite{temme10} and references therein for a more extensive discussion of generalisations of Markov chain properties to the quantum setting.)

For unital reversible channels, hypercontractivity of $\mathcal{T}$ (Schr\"odinger picture) and $\mathcal{T}^\dag$ (Heisenberg picture) are equivalent, although this is not true in general. Since we are only concerned with unital reversible channels in this paper, we work in the Schr\"odinger picture throughout. 

\subsection{Reverse norm inequalities}

We first give reverse versions of various standard $\mathbb{L}_p$ norm inequalities. These inequalities seem likely to be well-known, but we include proofs for completeness where these are not easy to find in the literature.

First we state the reverse H\"older inequality for operators, a proof of which can be found in Ref.~\cite{tomamichel14}:

\begin{lem}
\label{lem:revholder}
Let $f\geq0$ be a positive semidefinite operator and let $g>0$ be a positive definite operator. Fix $p$ and $p'$ such that $0 \le p \le 1$ and  $1/p + 1/p' = 1$. Then
\[ \tau(fg) \ge \|f\|_p \|g\|_{p'}. \]
\end{lem}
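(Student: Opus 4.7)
The plan is to reduce the reverse inequality to the standard (forward) operator Hölder inequality, handling the noncommutativity with the Araki–Lieb–Thirring inequality. I would first identify the right manipulation by sketching the commutative case: for nonnegative scalars and $0 < p < 1$, write $f^p = (fg)^p \cdot g^{-p}$ and apply forward Hölder with conjugate exponents $1/p$ and $1/(1-p)$, giving
\[
\tau(f^p) \;\le\; \bigl(\tau(fg)\bigr)^p \bigl(\tau(g^{-p/(1-p)})\bigr)^{1-p} \;=\; \bigl(\tau(fg)\bigr)^p \|g\|_{p'}^{-p},
\]
where the equality uses $p' = p/(p-1) < 0$ so that $(\tau(g^{p'}))^{1-p} = \|g\|_{p'}^{-p}$. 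Taking $p$-th roots (valid since $p > 0$) and rearranging yields the claim.

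To lift the argument to operators the obstacle is that $(fg)^p g^{-p}$ is not $f^p$ when $f,g$ do not commute, so the substitution breaks at the operator level. The fix is to introduce $K := g^{p/2} f^p g^{p/2} \ge 0$, so that by cyclicity $\tau(f^p) = \tau(g^{-p/2} K g^{-p/2}) = \tau(Kg^{-p})$. Applying the standard noncommutative Hölder inequality to $\tau(Kg^{-p})$ with exponents $1/p$ and $1/(1-p)$, both $\ge 1$, gives
\[
\tau(f^p) \;\le\; \|K\|_{1/p}\,\|g^{-p}\|_{1/(1-p)} \;=\; \bigl(\tau(K^{1/p})\bigr)^p \bigl(\tau(g^{-p/(1-p)})\bigr)^{1-p}.
\]
The remaining task is to bound $\tau(K^{1/p}) \le \tau(fg)$. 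This is exactly Araki–Lieb–Thirring with exponent $t = 1/p \ge 1$ applied to $A = g^p$ and $B = f^p$: $\tr\bigl((g^{p/2} f^p g^{p/2})^{1/p}\bigr) \le \tr\bigl((g^p)^{1/(2p)} (f^p)^{1/p} (g^p)^{1/(2p)}\bigr) = \tr(g^{1/2} f g^{1/2}) = \tr(fg)$.

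Combining these two ingredients recovers $\tau(f^p) \le (\tau(fg))^p \|g\|_{p'}^{-p}$; taking $p$-th roots gives $\tau(fg) \ge \|f\|_p\,\|g\|_{p'}$. The single substantive step is Araki–Lieb–Thirring, which cleanly absorbs the noncommutativity — without it the scalar manipulation would have no operator analogue. The endpoints $p = 1$ (trivial, equality of a sort) and $p = 0$ (a continuity/limit argument, using $\|f\|_0 = \exp \tau(\log f)$) require only minor additional attention, and the positivity hypotheses on $f$ and the strict positivity of $g$ enter precisely to ensure that $g^{-p}$ and the negative-exponent norms are well-defined.
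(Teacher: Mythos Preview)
The paper does not actually give its own proof of this lemma; it merely states the result and refers to Tomamichel--Berta--Hayashi~\cite{tomamichel14} for a proof. Your argument is a correct, self-contained derivation, and in fact follows essentially the same route taken in that reference: conjugate by $g^{p/2}$ to form $K = g^{p/2} f^p g^{p/2}$, apply forward H\"older with exponents $1/p$ and $1/(1-p)$ to $\tau(f^p) = \tau(Kg^{-p})$, and then control $\tau(K^{1/p})$ by $\tr(g^{1/2} f g^{1/2}) = \tr(fg)$ via the Araki--Lieb--Thirring inequality $\tr\bigl((A^{1/2} B A^{1/2})^r\bigr) \le \tr\bigl(A^{r/2} B^r A^{r/2}\bigr)$ for $r \ge 1$, here with $r = 1/p$, $A = g^p$, $B = f^p$. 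All the exponents and signs check out (in particular $(\tau(g^{p'}))^{1-p} = \|g\|_{p'}^{-p}$ since $-p/p' = 1-p$), and the positivity hypotheses are used precisely where you say. The endpoint $p = 1$ is a triviality, and the $p \to 0^+$ limit (where $p' \to 0^-$ and both norms become geometric means $\exp\tau(\log\,\cdot\,)$) follows by continuity from the interior case; this is adequate for the lemma as stated. So there is nothing to compare against in the paper proper, and your proof stands on its own.
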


We will also need the reverse Minkowski inequality:

\begin{lem}
\label{lem:revmink}
Let $f$ and $g$ be positive semidefinite operators and let $p < 1$ (if $p < 0$, $f$ and $g$ need to be positive definite). Then
\[ \|f + g \|_p \ge \|f\|_p + \|g\|_p. \]
\end{lem}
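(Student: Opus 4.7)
The plan is to mirror the classical proof of the reverse Minkowski inequality by applying the reverse Hölder inequality (Lemma~\ref{lem:revholder}) to the test operator $h := (f+g)^{p-1}$, which is the extremal choice that saturates reverse Hölder and converts the resulting bound into an identity on $\|f+g\|_p$.

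First I would work in the strictly positive case $f, g > 0$, so that $h$ is an unambiguous positive definite operator for every $p < 1$. Using the identity $(p-1)p' = p$ (a rearrangement of $1/p + 1/p' = 1$), the functional calculus gives
\[
\tau\bigl((f+g)h\bigr) = \tau\bigl((f+g)^p\bigr) = \|f+g\|_p^p, \qquad
\|h\|_{p'} = \bigl(\tau(h^{p'})\bigr)^{1/p'} = \bigl(\tau((f+g)^p)\bigr)^{1/p'} = \|f+g\|_p^{p-1}.
\]
Next I would split linearly, $\tau((f+g)h) = \tau(fh) + \tau(gh)$, and invoke reverse Hölder on each term to get
\[
\tau(fh) \ge \|f\|_p\,\|h\|_{p'}, \qquad \tau(gh) \ge \|g\|_p\,\|h\|_{p'}.
\]
In the regime $0 < p < 1$, Lemma~\ref{lem:revholder} applies directly with $f$ (resp.\ $g$) as the first operand whose exponent lies in $[0,1]$. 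In the regime $p < 0$ one has instead $p' \in (0,1)$, so the same lemma applies with the two operands swapped, yielding the same pair of inequalities. Combining everything,
\[
\|f+g\|_p^p \;\ge\; \bigl(\|f\|_p + \|g\|_p\bigr)\,\|f+g\|_p^{p-1},
\]
and dividing through by $\|f+g\|_p^{p-1}$ gives the claim.

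Finally, for the remaining case $0 \le p < 1$ with $f,g$ only positive semidefinite (so that $f+g$ may be singular and $h = (f+g)^{p-1}$ ill-defined), I would apply the inequality just proved to the regularized pair $f + \epsilon I, g + \epsilon I$ and let $\epsilon \to 0^+$, using continuity of $\|\cdot\|_p$ in its argument.

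The main obstacle I foresee is simply the bookkeeping between the two sub-regimes of $p$: Lemma~\ref{lem:revholder} is stated with a fixed convention about which operand carries the exponent in $[0,1]$, and this role is played by $f$ or $g$ when $0 < p < 1$, but by $h$ when $p < 0$. Once that dichotomy is handled the rest of the argument is forced by the requirement that $h$ saturate reverse Hölder, and everything else is routine functional-calculus manipulation.
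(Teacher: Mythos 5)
Your proof is correct and takes essentially the same route as the paper: apply the reverse H\"older inequality with the saturating test operator $(f+g)^{p-1}$, split the trace linearly across $f$ and $g$, and rearrange. If anything your write-up is slightly more careful about bookkeeping than the paper's (working directly with $\tau(\cdot)$ rather than passing through $\|\cdot\|_1$, spelling out which operand carries the $[0,1]$-exponent in the two sub-regimes of $p$, and noting the $\epsilon$-regularization needed when $f+g$ may be singular for $0 \le p < 1$).
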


\begin{proof}
The proof can be obtained from the reverse H\" older inequality. Let $f,g$ be nonzero positive semidefinite operators. Then
\beas \|f+g\|_p^p&=&\|(f+g)(f+g)^{p-1}\|_1\\
&=&\|f(f+g)^{p-1}\|_1+\|g(f+g)^{p-1}\|_1\\
&\geq&(\|f\|_p+\|g\|_p)\frac{\|f+g\|_p^p}{\|f+g\|_p}\eeas
where the inequality follows from the reverse H\" older inequality. The claimed inequality follows by rearranging.
\end{proof}

The following lemma gives a variational characterization for $p<1$:

\begin{lem}
\label{lem:varchar}
Let $f$ be positive semidefinite operators and let $p < 1$ (if $p < 0$, $f$ needs to be positive definite). Then
\[ \|f\|_p=\inf\{\tau(fg): g>0, \|g\|_{p'}\geq 1\}. \]
\end{lem}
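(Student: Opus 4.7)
The plan is to establish the two inequalities $\inf\{\tau(fg):\,g>0,\,\|g\|_{p'}\ge1\}\ge\|f\|_p$ and $\le\|f\|_p$ separately.  For the first direction I would invoke Lemma \ref{lem:revholder} directly: for any admissible $g$,
\[ \tau(fg)\;\ge\;\|f\|_p\,\|g\|_{p'}\;\ge\;\|f\|_p, \]
so taking the infimum gives the bound.  (When $p<0$, reverse H\"older is applied with the roles of $f$ and $g$ swapped, using $(p')'=p$ and the fact that $p'\in(0,1)$ for $p<0$; the hypothesis that $f$ is positive definite in this regime guarantees this is legal.)

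For the reverse direction I would exhibit an explicit $g$ saturating reverse H\"older.  Motivated by the equality case in the standard H\"older inequality, the candidate is
\[ g\;:=\;\|f\|_p^{1-p}\,f^{p-1}, \]
which is positive definite whenever $f$ is.  The key algebraic fact is the identity $(p-1)p'=p$, which is just a rearrangement of $1/p+1/p'=1$.  Using it,
\[ \|f^{p-1}\|_{p'}^{p'}\;=\;\tau(f^{(p-1)p'})\;=\;\tau(f^p)\;=\;\|f\|_p^p, \]
so $\|f^{p-1}\|_{p'}=\|f\|_p^{p-1}$ and hence $\|g\|_{p'}=\|f\|_p^{1-p}\cdot\|f\|_p^{p-1}=1$.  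A direct computation then gives
\[ \tau(fg)\;=\;\|f\|_p^{1-p}\,\tau(f\cdot f^{p-1})\;=\;\|f\|_p^{1-p}\,\|f\|_p^p\;=\;\|f\|_p, \]
so the infimum is at most $\|f\|_p$.

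The only real obstacle is the case $0<p<1$ with $f$ positive semidefinite but not invertible, where $f^{p-1}$ is undefined on $\ker f$.  The plan here is the standard regularization: replace $f$ by $f_\epsilon:=f+\epsilon I$, construct $g_\epsilon:=\|f_\epsilon\|_p^{1-p}f_\epsilon^{p-1}$ as above (which is well-defined and positive definite), and let $\epsilon\to0$.  Writing
\[ \tau(fg_\epsilon)\;=\;\tau(f_\epsilon g_\epsilon)-\epsilon\,\tau(g_\epsilon)\;=\;\|f_\epsilon\|_p-\epsilon\,\|f_\epsilon\|_p^{1-p}\,\tau(f_\epsilon^{p-1}), \]
continuity of the Schatten norm in $\epsilon$ gives $\|f_\epsilon\|_p\to\|f\|_p$, while the eigenvalue-wise bound $\epsilon(\lambda+\epsilon)^{p-1}\le\epsilon^{p}\to0$ (valid for $\lambda\ge0$ and $0<p<1$) forces the subtraction term to vanish.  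For $p<0$, $f$ is already strictly positive by hypothesis, so no regularization is needed and the explicit $g$ works directly.
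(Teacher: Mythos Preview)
Your proof is correct and follows the same route the paper indicates: the paper's proof is the single sentence ``The proof again follows directly from reverse H\"older,'' and you have simply supplied the details---one direction from Lemma~\ref{lem:revholder}, the other by exhibiting the equality case $g=\|f\|_p^{1-p}f^{p-1}$, together with the standard $\epsilon$-regularization when $f$ is singular.
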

\begin{proof}
The proof again follows directly from reverse H\"older.
\end{proof}

Finally, we show that the depolarizing channel is expansive for $p < 1$:

\begin{lem}
\label{lem:expansion}
Let $f$ be a positive semidefinite operator on $\mathcal{M}_d$ and let $p < 1$ (if $p < 0$, $f$ needs to be positive definite). Let $\mathcal{T}:\mathcal{M}_d\rightarrow\mathcal{M}_d$ be a unital quantum channel. Then
\[ \|\mathcal{T}(f) \|_p \ge \|f\|_p. \]
\end{lem}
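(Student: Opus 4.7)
The strategy is to reduce the claim to Choi's operator Jensen inequality applied to the function $x\mapsto x^p$. Recall that if $\Phi$ is a unital positive linear map and $h$ is operator concave on an interval containing the spectrum of $A$, then $h(\Phi(A))\geq \Phi(h(A))$, and the reverse inequality holds when $h$ is operator convex. Since $\mathcal{T}$ is unital and completely positive, this applies with $\Phi=\mathcal{T}$.

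I would split into two cases. For $0<p<1$, $x\mapsto x^p$ is operator concave on $[0,\infty)$, so Choi's inequality gives $\mathcal{T}(f)^p\geq\mathcal{T}(f^p)$. Taking the normalized trace (which preserves the operator order) and using that $\mathcal{T}$ is trace-preserving,
\[
\|\mathcal{T}(f)\|_p^p=\tau(\mathcal{T}(f)^p)\geq\tau(\mathcal{T}(f^p))=\tau(f^p)=\|f\|_p^p.
\]
Raising both sides to the power $1/p>0$ preserves the inequality and gives the claim. For $p<0$, $f$ is positive definite by hypothesis, and because $\mathcal{T}$ is unital and positive, $f\geq\epsilon I$ forces $\mathcal{T}(f)\geq\epsilon I$, so $\mathcal{T}(f)^p$ is well-defined. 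In this range $x\mapsto x^p$ is operator convex on $(0,\infty)$, and Choi's inequality instead yields $\mathcal{T}(f)^p\leq\mathcal{T}(f^p)$. The same trace computation gives $\|\mathcal{T}(f)\|_p^p\leq\|f\|_p^p$; since $p<0$, raising to the $1/p$ power reverses the direction and again produces $\|\mathcal{T}(f)\|_p\geq\|f\|_p$.

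I do not anticipate a significant obstacle. The technical input --- operator concavity/convexity of $x\mapsto x^p$ on the relevant intervals, a classical fact due to L\"owner --- is standard, and the rest is a two-line manipulation exploiting the trace-preserving and unital properties of the channel. An alternative route (which I considered and rejected) is to try to deduce the lemma directly from the variational characterization in Lemma \ref{lem:varchar}: writing $\|\mathcal{T}(f)\|_p=\inf\{\tau(f\,\mathcal{T}^*(g)):g>0,\,\|g\|_{p'}\geq 1\}$ and invoking reverse H\"older reduces the problem to the bound $\|\mathcal{T}^*(g)\|_{p'}\geq\|g\|_{p'}$, which is the same statement applied to the adjoint channel --- so this route is circular, and Choi's inequality is the genuine input.
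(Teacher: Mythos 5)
Your approach is the same in spirit as the paper's --- both reduce the claim to a Jensen-type inequality for the unital map $\mathcal{T}$ and the power function $x\mapsto x^p$ --- but there is a genuine gap in your argument for $p<-1$. You invoke Choi's \emph{operator} Jensen inequality, which produces the operator inequality $\mathcal{T}(f)^p\le\mathcal{T}(f^p)$ and therefore requires $x\mapsto x^p$ to be \emph{operator} convex on $(0,\infty)$. By L\"owner's classification, $t\mapsto t^r$ is operator convex on $(0,\infty)$ only for $r\in[-1,0]\cup[1,2]$; for $p<-1$ it is ordinary convex but not operator convex, so the operator inequality you want can fail. Since the lemma is stated for all $p<1$, including $p<-1$, your proof as written does not cover the full range.

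The fix is exactly what the paper does: avoid the operator inequality and instead use the \emph{trace} Jensen inequality of Hansen--Pedersen, $\tr\,\phi\bigl(\sum_\alpha B_\alpha^\dag f B_\alpha\bigr)\le\tr\sum_\alpha B_\alpha^\dag \phi(f)B_\alpha$ whenever $\sum_\alpha B_\alpha^\dag B_\alpha=I$, which holds for \emph{ordinary} convex $\phi$ (reversed for concave). Writing $\mathcal{T}$ in Kraus form with $\sum_\alpha A_\alpha A_\alpha^\dag=I$ from unitality and taking $B_\alpha=A_\alpha^\dag$, $\phi(x)=|x|^p$, gives the needed trace inequality for every $p<1$ (and every $p<0$), since after tracing you never needed the pointwise operator comparison. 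Your computation from $\tau(\mathcal{T}(f)^p)\lessgtr\tau(f^p)$ to $\|\mathcal{T}(f)\|_p\ge\|f\|_p$ is fine, and your observation that the duality route via Lemma~\ref{lem:varchar} would be circular is also correct. For $0<p<1$ your argument is valid as stated, because $x^p$ is genuinely operator concave there.
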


\begin{proof}
It is sufficient to show that $\tr |\mathcal{T}(f)|^p \ge \tr |f|^p$ for $p \ge 0$, and $\tr |\mathcal{T}(f)|^p \le \tr |f|^p$ for $p < 0$. We write the channel in Kraus form, $\mathcal{T}(f)=\sum_\alpha A_\alpha f A_\alpha^\dag$ for some Kraus operators $\{A_\alpha\}$, and use the following operator Jensen's inequality (see e.g.~\cite{hansen03}): If $\phi$ is a convex function, then
\[ \tr \phi\left(\sum_\alpha B_\alpha^\dag f B_\alpha\right)\leq \tr \sum_\alpha B_\alpha^\dag \phi(f) B_\alpha \]
for any positive semidefinite operator $f$, and an arbitrary sequence of operators $(B_\alpha)$ such that $\sum_\alpha B_\alpha^\dag B_\alpha = I$. For concave functions the inequality is reversed. As $\mathcal{T}$ is unital we have $\sum_\alpha A_\alpha A_\alpha^\dag = I$, so taking $B_\alpha = A_\alpha^\dag$ and choosing $\phi(x)=|x|^p$ (which is concave for $p\ge 0$ and convex for $p\le0$) yields expansivity of the channel.
\end{proof}


\section{Proof of Theorem \ref{thm:reversequantumdep}}

Our proof of Theorem \ref{thm:reversequantumdep} will be based on putting together a sequence of small, and somewhat more general lemmas; Theorem \ref{thm:reversequantumdep} then becomes a corollary. The overall strategy is to generalize to the quantum setting the proof technique used by Mossel, Oleszkiewicz and Sen~\cite{mossel13a} to prove general reverse hypercontractive inequalities classically.  The main technical tool we will need is a quantum generalization of a classical inequality proven by these authors~\cite{mossel13a}, which in turn generalizes an inequality of Stroock~\cite{stroock84} and Varopoulos~\cite{varopoulos85} to $p,q < 1$.

\begin{lem}[Quantum Stroock-Varopoulos inequality]
\label{lem:qstroock}
Let $\mathcal{T}$ be a unital, reversible and primitive quantum channel. Let ${\cal L} = c_0\left({\bf id} - \mathcal{T}\right)$, with $c_0 \in \R_+$, denote the generator of a reversible quantum semigroup. Then for $p \geq q$ with $p,q \in (0,2] \backslash \{1\}$ and positive semidefinite $g \in {\cal M}_d$ we have
\[
  p p' {\cal E}_\mathcal{L}(g^{1/p'},g^{1/p}) \leq q q' {\cal E}_\mathcal{L}(g^{1/q'},g^{1/q}).
\]
Here $q'$ and $p'$ denote the H{\"o}lder conjugates of $q$ and $p$ respectively.
\end{lem}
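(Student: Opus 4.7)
My plan is to reduce the operator inequality to a pointwise scalar inequality on the spectrum of $g$, and then to establish the latter via an explicit integral representation.

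First, diagonalise $g = \sum_i \lambda_i |i\rangle\langle i|$ in an orthonormal eigenbasis and set $K(i,j) := \tau(|i\rangle\langle i|\,\mathcal{T}(|j\rangle\langle j|))$. Complete positivity gives $K(i,j)\ge 0$, reversibility gives $K(i,j) = K(j,i)$, and unitality together with trace preservation gives $\sum_j K(i,j) = 1/d$ for each $i$. Using $g^{1/p}g^{1/p'} = g$ (since $1/p + 1/p' = 1$) and the symmetry of $K$, one rewrites
\[
\mathcal{E}_\mathcal{L}(g^{1/p'}, g^{1/p}) = c_0\bigl[\tau(g) - \tau\bigl(g^{1/p'} \mathcal{T}(g^{1/p})\bigr)\bigr] = \frac{c_0}{2} \sum_{i,j} K(i,j)\, \Phi_p(\lambda_i, \lambda_j),
\]
where $\Phi_p(a,b) := a + b - a^{1/p'} b^{1/p} - a^{1/p} b^{1/p'}$. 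Since $K(i,j)\ge 0$, it then suffices to prove the scalar inequality $pp'\,\Phi_p(a,b) \le qq'\,\Phi_q(a,b)$ for all $a,b>0$.

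The scalar inequality is the crux of the argument. I would factor $\Phi_p(a,b) = (a^{1/p'} - b^{1/p'})(a^{1/p} - b^{1/p})$ and apply the fundamental theorem of calculus, in the form $a^{1/p} - b^{1/p} = \tfrac{1}{p}\int_b^a t^{-1/p'}\,dt$ (and similarly for the $1/p'$ factor, using $1/p - 1 = -1/p'$), to obtain
\[
pp'\,\Phi_p(a,b) = \int_b^a\!\int_b^a s^{-1/p}\, t^{-1/p'}\, ds\, dt.
\]
Since the integration domain is symmetric in $s$ and $t$, I may replace the integrand by its symmetrisation; a direct calculation using $1/p + 1/p' = 1$ identifies this as
\[
\tfrac{1}{2}\bigl(s^{-1/p} t^{-1/p'} + s^{-1/p'} t^{-1/p}\bigr) = \frac{\cosh\!\bigl((\tfrac{1}{p} - \tfrac{1}{2})\ln(s/t)\bigr)}{\sqrt{st}},
\]
yielding
\[
pp'\,\Phi_p(a,b) = \iint_R \frac{\cosh\!\bigl((\tfrac{1}{p} - \tfrac{1}{2})\ln(s/t)\bigr)}{\sqrt{st}}\, ds\, dt, \qquad R := \bigl[\min(a,b),\max(a,b)\bigr]^2.
\]

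With this representation in hand the conclusion is transparent. For $p \in (0,2]\setminus\{1\}$ the quantity $\tfrac{1}{p} - \tfrac{1}{2}$ lies in $[0,\infty)$ and is a strictly decreasing function of $p$; since $\cosh$ is even and non-decreasing on $[0,\infty)$, the integrand is non-increasing in $p$ pointwise in $(s,t)$. Therefore $pp'\,\Phi_p(a,b) \le qq'\,\Phi_q(a,b)$ whenever $p \ge q$, and summing this pointwise inequality against the nonnegative weights $K(i,j)$ delivers the desired operator bound. The main obstacle is the scalar inequality itself, and essentially all of the work is in producing the symmetrised integral representation; once it is in place, the monotonicity of a single hyperbolic cosine does the rest. (When $p$ or $q$ lies in $(0,1)$ and $g$ has a zero eigenvalue, $g^{1/p'}$ is singular; this edge case is handled by proving the inequality first for $g + \epsilon I$ and taking $\epsilon \downarrow 0$.)
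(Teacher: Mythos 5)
Your proof is correct and follows the same overall architecture as the paper's: diagonalise $g$, rewrite the Dirichlet form as a sum over eigenvalue pairs weighted by the symmetric doubly-stochastic kernel (your $K(i,j)$, the paper's $P_{ij}/d$), and reduce the operator inequality to the two-point scalar inequality
\[
pp'\bigl(a^{1/p'}-b^{1/p'}\bigr)\bigl(a^{1/p}-b^{1/p}\bigr)\;\leq\;qq'\bigl(a^{1/q'}-b^{1/q'}\bigr)\bigl(a^{1/q}-b^{1/q}\bigr).
\]
Where you genuinely diverge is in the treatment of this scalar inequality: the paper simply cites it from Mossel, Oleszkiewicz and Sen, whereas you derive it from scratch via the integral representation
\[
pp'\,\Phi_p(a,b)=\iint_{[\min(a,b),\max(a,b)]^2}\frac{\cosh\bigl((\tfrac{1}{p}-\tfrac12)\ln(s/t)\bigr)}{\sqrt{st}}\,ds\,dt,
\]
after which monotonicity of $\cosh$ and of $\tfrac{1}{p}-\tfrac12$ on $(0,2]$ does the rest. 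I checked the computation: the factorisation of $\Phi_p$, the fundamental-theorem-of-calculus step using $1/p-1=-1/p'$, and the symmetrisation into the $\cosh$ form are all correct, and the signs work out for both $p>1$ (where $pp'>0$, $\Phi_p\geq 0$) and $p<1$ (where $pp'<0$, $\Phi_p\leq 0$). The $\epsilon$-regularisation remark for semidefinite $g$ when $1/p'<0$ is also an appropriate (and in fact necessary) precaution that the paper leaves implicit. The net effect is a more self-contained argument: the paper's proof is shorter but opaque at the crucial step, while yours makes the monotonicity in $p$ manifestly visible through the $\cosh$ kernel and requires no external reference.
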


\begin{proof}
We prove the lemma by showing that the required inequality follows from the classical generalized Stroock-Varopoulos inequality~\cite{mossel13a} when considering reversible generators of the form ${\cal L} = c_0\left({\bf id} - \mathcal{T}\right)$.  Consider for $a,b \in \R$ the matrix power $g^a = \sum_i g_i^a \proj{i}$ in the eigenbasis of $g$. Then
\[
{\cal E}_\mathcal{L}(g^a,g^b)  = \tau\left( g^a {\cal L}(g^b)\right) = \frac{1}{d} \sum_{i,j} \; g_i^a g_j^b \; \bra{i} {\cal L}(\proj{j}) \ket{i}.
\]
Given the Kraus decomposition $\mathcal{T}(f) = \sum_\alpha A_\alpha f A_\alpha^\dag$, we have $\bra{i} {\cal L}(\proj{j}) \ket{i} = c_0(\delta_{ij} - P_{ij})$, where $P_{ij} = \sum_\alpha
|\bra{i}A_\alpha\ket{j}|^2 \geq 0$ is a doubly stochastic (classical) probability transition matrix. Moreover since $ \mathcal{T}$ is both unital and reversible we have that $P_{ij}= P_{ji}$. 
So
\beas
\frac{1}{d} \sum_{i,j} \; g_i^a g_j^b \; \bra{i} {\cal L}(\proj{j}) \ket{i} &=& \frac{c_0}{d} \sum_{i,j} \; g_i^a g_j^b \; \left(\delta_{ij} - P_{ij}\right)\\ &=& \frac{c_0}{2d} \sum_{i,j} \left( g_i^a g_i^b P_{ij} + g_j^a g_j^b P_{ij} - g_i^a g_j^b P_{ij} - g_j^a g_i^b P_{ij}\right),
\eeas
where the second equality follows from $\sum_j P_{ij} = 1$, $P_{ij}= P_{ji}$. We can write this as
\[
{\cal E}_\mathcal{L}(g^a,g^b) =  \frac{c_0}{2d} \sum_{i,j} P_{ij} \left(g_i^a - g_j^a \right) \left(g_i^b - g_j^b\right).
\]

The claimed inequality then follows from a two-point inequality and then taking an average with respect to the measure $\frac{1}{d} P_{ij}$. It is shown in~\cite[proofs of Lemma 2.4 and Theorem 2.1]{mossel13a} that
\[
 pp'\left(x^\frac{1}{p} - y^\frac{1}{p}\right)\left(x^\frac{1}{p'} - y^\frac{1}{p'}\right) \leq  qq'\left(x^\frac{1}{q} - y^\frac{1}{q}\right)\left(x^\frac{1}{q'} - y^\frac{1}{q'}\right)
\]
for all $x,y \in \R_+$. This implies in particular for the eigenvalues $g_i \geq 0$ of $g$  that
\[
\frac{c_0 pp'}{2d} \sum_{i,j} P_{ij} \left(g_i^\frac{1}{p} - g_j^\frac{1}{p}\right)\left(g_i^\frac{1}{p'} - g_j^\frac{1}{p'}\right) \leq \frac{c_0 qq'}{2d} \sum_{i,j} P_{ij} \left(g_i^\frac{1}{q} - g_j^\frac{1}{q}\right)\left(g_i^\frac{1}{q'} - g_j^\frac{1}{q'}\right).
\]
Reversing the steps now yields the inequality as stated in the lemma.
\end{proof}

{\bf Remark}: The case $q=1$ can be obtained in terms of the appropriate limit.

By choosing suitable parameters ($p = p' = 2$) in Lemma \ref{lem:qstroock}, we obtain a generalization of a lemma of Gross~\cite{gross75} to the quantum setting, for $p<1$. A quantum version of this lemma was previously proven by King~\cite{king14} for $1 \le p \le 2$ in the case of unital qubit channels.

\begin{cor}[Quantum Gross's lemma for {$p \in (0,2]$}]
\label{cor:grossgen}
Let $\mathcal{T}$ be a unital, reversible and primitive quantum channel. Let ${\cal L} = c_0\left({\bf id} - \mathcal{T}\right)$, with $c_0 \in \R_+$, denote the generator of a reversible quantum semigroup. Then, for any $p \in (0,2]$,
\[ \mathcal{E}_\mathcal{L}(f^{p/2},f^{p/2}) \le \frac{p^2}{4(p-1)} \mathcal{E}_\mathcal{L}(f^{p-1},f). \]
\end{cor}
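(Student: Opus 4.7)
The plan is to apply the quantum Stroock--Varopoulos inequality (Lemma \ref{lem:qstroock}) to a single well-chosen substitution. To avoid a notation clash between the corollary's parameter $p$ and the lemma's parameters, I will temporarily call the lemma's parameters $P$ and $Q$ with $P \ge Q$ and $P, Q \in (0,2]\setminus\{1\}$. The target inequality only involves two Dirichlet-form quantities, $\mathcal{E}_{\mathcal{L}}(f^{p/2},f^{p/2})$ and $\mathcal{E}_{\mathcal{L}}(f^{p-1},f)$, and the former is symmetric in its arguments, which suggests killing the $1/P$ versus $1/P'$ asymmetry on the left-hand side by taking $P=P'=2$ (i.e.\ the self-conjugate choice).

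With that in mind, I would set $P=2$ (so $P'=2$ and $PP'=4$) and $g = f^p$, which is positive semidefinite since $f\ge 0$. Then the left-hand side of Lemma \ref{lem:qstroock} becomes
\[
PP'\,\mathcal{E}_{\mathcal{L}}(g^{1/P'},g^{1/P}) \;=\; 4\,\mathcal{E}_{\mathcal{L}}(f^{p/2},f^{p/2}).
\]
To match the right-hand side of the corollary I would choose $Q = p$, so that $Q' = p/(p-1)$ and $QQ' = p^{2}/(p-1)$. Then $g^{1/Q} = f$ and $g^{1/Q'} = f^{p-1}$, so
\[
QQ'\,\mathcal{E}_{\mathcal{L}}(g^{1/Q'},g^{1/Q}) \;=\; \frac{p^2}{p-1}\,\mathcal{E}_{\mathcal{L}}(f^{p-1},f).
\]
The hypothesis $P\ge Q$ holds because $p\le 2$, and $P,Q\in(0,2]\setminus\{1\}$ as long as $p\neq 1$. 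Plugging these into Lemma \ref{lem:qstroock} and dividing both sides by $4$ yields exactly the claimed inequality for all $p\in(0,2]\setminus\{1\}$.

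Finally, the endpoint $p=1$ follows from the remark following Lemma \ref{lem:qstroock}: both sides of the corollary extend continuously to $p=1$ (one obtains the entropy-type Dirichlet form in the limit, and $\mathcal{E}_{\mathcal{L}}(I,f)=\tau(\mathcal{L}(f)) = 0$ matches the pole $1/(p-1)$ on the right), so the bound holds in the limit. I do not anticipate any genuine obstacle: the only subtlety is the bookkeeping to avoid confusing the corollary's $p$ with the lemma's $p$, and a standard $f \mapsto f+\epsilon I$ perturbation can be used if one is worried about $f$ having zero eigenvalues when raising it to a non-integer power, which is innocuous in finite dimension.
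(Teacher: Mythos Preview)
Your proposal is correct and is exactly the argument the paper has in mind: the paper simply states that the corollary follows by choosing $p=p'=2$ in Lemma~\ref{lem:qstroock}, and your substitution $P=2$, $Q=p$, $g=f^p$ spells this out, with the $p=1$ endpoint handled by the same limiting remark the paper invokes.
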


This corollary now lets us prove logarithmic Sobolev inequalities for $p<1$, given a logarithmic Sobolev inequality for $p=2$. This condition can be understood as an extension of $\mathbb{L}_p$ regularity, but restricted to unital channels \cite{kastoryano13,olkiewicz99}. We say that a semigroup generated by $\mathcal{L}$ satisfies a 2-log-Sobolev inequality with constant $\alpha$ if
\begin{equation} \operatorname{Ent}(f^2) \le \alpha \mathcal{E}_\mathcal{L}(f,f). \label{eqn:LSI}\end{equation}

{\bf Remark}: Primitivity of the semigroup is required, as otherwise the log-Sobolev inequality is trivial and the map is not hypercontractive. Indeed, primitivity guarantees that the stationary state of the semigroup is unique and has full rank. For unital semigroups, the maximally mixed state is stationary, hence full rank. To have a non-trivial log-Sobolev inequality we also have to ensure uniqueness of the fixed point. To see this note that for all reversible channels one has that $\alpha^{-1} \leq \lambda$, where $\lambda$ is the spectral gap of ${\cal L}$, cf.~\cite{olkiewicz99, kastoryano13}. For unital channels that are non-primitive we have that $\lambda = 0$ and hence the constant $\alpha$ diverges.

\begin{lem}[$p$-log-Sobolev inequalities]
\label{lem:plogsob}
Let $\mathcal{T}$ be a unital, reversible and primitive quantum channel. Let ${\cal L} = c_0\left({\bf id} - \mathcal{T}\right)$, with $c_0 \in \R_+$, denote the generator of a reversible quantum semigroup satisfying a 2-log-Sobolev inequality with constant $\alpha$. For any $p \in (0,2]$,
\[ \Ent(f^p) \le \frac{\alpha p^2}{4(p-1)} \mathcal{E}_\mathcal{L}(f^{p-1},f). \]
\end{lem}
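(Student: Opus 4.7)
The plan is to obtain the $p$-log-Sobolev inequality as a direct algebraic chaining of the hypothesized 2-log-Sobolev inequality and the quantum Gross lemma (Corollary \ref{cor:grossgen}). First I would substitute $f \mapsto f^{p/2}$ into the 2-LSI \eqref{eqn:LSI}; since $f$ is positive semidefinite and $p \in (0,2]$, the operator $f^{p/2}$ is well-defined, and this substitution yields
\[
\operatorname{Ent}(f^p) \;=\; \operatorname{Ent}\!\bigl((f^{p/2})^2\bigr) \;\le\; \alpha\, \mathcal{E}_\mathcal{L}(f^{p/2}, f^{p/2}).
\]
Then I would apply Corollary \ref{cor:grossgen} to the Dirichlet form on the right, replacing $\mathcal{E}_\mathcal{L}(f^{p/2},f^{p/2})$ by $\tfrac{p^2}{4(p-1)} \mathcal{E}_\mathcal{L}(f^{p-1}, f)$, and concatenate the two bounds to recover exactly the claimed inequality.

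The one point that deserves a careful sanity check, rather than any real work, is the sign behaviour around $p=1$. For $p < 1$ the prefactor $p^2/(4(p-1))$ is negative, but the symmetric spectral representation
\[
\mathcal{E}_\mathcal{L}(g^a, g^b) \;=\; \frac{c_0}{2d} \sum_{i,j} P_{ij}\,(g_i^a - g_j^a)(g_i^b - g_j^b),
\]
derived inside the proof of Lemma \ref{lem:qstroock}, shows that $\mathcal{E}_\mathcal{L}(f^{p-1}, f) \le 0$ for $p<1$ because $x \mapsto x^{p-1}$ is then monotone decreasing, so $(g_i^{p-1} - g_j^{p-1})(g_i - g_j) \le 0$. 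Thus both sides of the corollary (and hence both sides of the derived bound) are non-negative on both sides of $p=1$, consistent with $\operatorname{Ent}(f^p)\ge 0$. The excluded point $p=1$ is a removable singularity: both sides of the target inequality converge to the entropy-production form associated with $\tau(\mathcal{L}(f)\ln f)$ in the limit $p\to 1$, and can be filled in by continuity.

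Because all of the genuinely quantum technical content sits in Lemma \ref{lem:qstroock} and Corollary \ref{cor:grossgen}, there is essentially no obstacle at this step; the proof is a two-line composition. The only mild care needed is to verify that the substitution $f \mapsto f^{p/2}$ is legal under the standing assumptions on $f$ (immediate, using positive definiteness if $p < 0$ were allowed, and positive semidefiniteness here since $p > 0$), and to interpret the sign structure near $p=1$ as above.
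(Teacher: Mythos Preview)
Your proposal is correct and follows essentially the same two-line composition as the paper: apply the 2-LSI to $f^{p/2}$, then chain with Corollary~\ref{cor:grossgen}. Your added discussion of the sign structure for $p<1$ and the removable singularity at $p=1$ is a helpful sanity check that the paper leaves implicit, but the underlying argument is identical.
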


\begin{proof}
Applying the 2-log-Sobolev inequality to $f^{p/2}$ and then Corollary \ref{cor:grossgen} to $g = f^p$, we have
\[ \Ent(f^p) \le \alpha \mathcal{E}_\mathcal{L}(f^{p/2},f^{p/2}) \le \frac{\alpha p^2}{4(p-1)} \mathcal{E}_\mathcal{L}(f^{p-1},f)\]
as required.
\end{proof}

The next lemma we will need is a technical claim regarding norm derivatives.

\begin{lem}[Norm derivative]
Let $\mathcal{T}_t$ be a Markovian family of quantum channels. Let $t:\R \rightarrow \R$ be a differentiable function of $p$ to be defined, and set $f_{t(p)} = \mathcal{T}_{t(p)}(f)$, with $f \geq 0$. Then
\label{lem:derivative}
\[ \frac{d}{dp} \ln \|\mathcal{T}_{t(p)}(f) \|_p = \frac{1}{p^2 \tau(f^p_{t(p)})}\left(\operatorname{Ent}(f_{t(p)}^p) -p^2 t'(p) \mathcal{E}_\mathcal{L}(f^{p-1}_{t(p)},f_{t(p)})\right). \]
\end{lem}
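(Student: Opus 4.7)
The plan is direct computation via the chain rule, exploiting the identity $\ln\|f_{t(p)}\|_p = \frac{1}{p}\ln\tau(f_{t(p)}^p)$ and the semigroup equation $\frac{d}{ds}\mathcal{T}_s = -\mathcal{L}\mathcal{T}_s$. Write $A(p) := \tau(f_{t(p)}^p)$, so that the left-hand side becomes
\[
\frac{d}{dp}\ln\|f_{t(p)}\|_p \;=\; -\frac{\ln A(p)}{p^2} + \frac{A'(p)}{p\,A(p)},
\]
reducing the problem to evaluating $A'(p)$.

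The derivative $A'(p)$ splits into an ``exponent'' part and a ``time'' part via the two-variable chain rule applied to $(p,s)\mapsto\tau(f_s^p)$. For the exponent part, I would diagonalize the positive operator $f_{t(p)} = \sum_i \lambda_i |i\rangle\langle i|$ and note that $\tau(f_{t(p)}^p) = \frac{1}{d}\sum_i\lambda_i^p$; differentiating in $p$ with $s$ fixed gives $\tau(f_{t(p)}^p \ln f_{t(p)})$. For the time part, differentiating $\tau(f_s^p)$ in $s$ at fixed $p$ uses cyclicity of the trace to eliminate the noncommutativity issue: $\frac{d}{ds}\tau(f_s^p) = p\,\tau(f_s^{p-1}\tfrac{d}{ds}f_s) = -p\,\tau(f_s^{p-1}\mathcal{L}(f_s)) = -p\,\mathcal{E}_\mathcal{L}(f_s^{p-1},f_s)$. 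Combining,
\[
A'(p) = \tau(f_{t(p)}^p \ln f_{t(p)}) - p\,t'(p)\,\mathcal{E}_\mathcal{L}(f_{t(p)}^{p-1},f_{t(p)}).
\]

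Next I would use the definition of entropy to rewrite the first term: since $\operatorname{Ent}(f_{t(p)}^p) = p\,\tau(f_{t(p)}^p \ln f_{t(p)}) - A(p)\ln A(p)$, one has $\tau(f_{t(p)}^p\ln f_{t(p)}) = \frac{1}{p}[\operatorname{Ent}(f_{t(p)}^p) + A(p)\ln A(p)]$. Substituting into $A'(p)$ and then into $-\ln A(p)/p^2 + A'(p)/(pA(p))$, the two terms involving $\ln A(p)$ cancel, leaving exactly
\[
\frac{1}{p^2 A(p)}\bigl(\operatorname{Ent}(f_{t(p)}^p) - p^2 t'(p)\,\mathcal{E}_\mathcal{L}(f_{t(p)}^{p-1},f_{t(p)})\bigr),
\]
which is the claim.

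The only real subtlety is justifying term-by-term differentiation of the operator-valued quantity $f_s^p$, since $f_s$ and $\mathcal{L}(f_s)$ need not commute; but this is handled cleanly by cyclicity inside the trace (for the $s$-derivative) and by the spectral decomposition of the positive operator $f_{t(p)}$ (for the $p$-derivative). If $f$ has a zero eigenvalue and $p<1$, one should read the statement with the convention that all terms are finite, or restrict to $f>0$ and pass to the limit; this is implicit in the hypothesis $f\geq 0$ together with the usual caveats when $p<0$.
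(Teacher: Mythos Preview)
Your proof is correct and the overall structure matches the paper's: write $\ln\|f_{t(p)}\|_p = p^{-1}\ln\tau(f_{t(p)}^p)$, differentiate, and recognize the entropy and Dirichlet form. The difference is purely in how the operator derivative is handled. The paper computes $\frac{d}{dp}f_{t(p)}^p$ at the operator level using the holomorphic functional calculus, writing $f_{t(p)}^p = \frac{1}{2\pi i}\int_{\partial\Delta} z^p(z-f_{t(p)})^{-1}\,dz$ and differentiating under the integral; the residue at the second-order pole then produces the $p f_{t(p)}^{p-1}$ factor after taking the trace. You instead pass to the scalar function $(p,s)\mapsto\tau(f_s^p)$ first, and treat the two partial derivatives separately by spectral decomposition and the trace identity $\partial_s\tr g(f_s)=\tr\bigl(g'(f_s)\,\partial_s f_s\bigr)$. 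Your route is more elementary in that it avoids complex analysis entirely; the paper's route is more systematic and would generalize more readily to settings where the spectral decomposition is less convenient. One small comment: your justification ``cyclicity of the trace'' for $\partial_s\tau(f_s^p)=p\,\tau(f_s^{p-1}\partial_s f_s)$ is literally a proof only for integer $p$; for non-integer $p$ it follows either from the eigenvalue argument you already used for the $p$-derivative (via first-order perturbation theory for the eigenvalues), or from the Cauchy-integral machinery the paper employs. This is standard, but worth noting since you flag it yourself as the main subtlety.
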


A variant of this lemma was proven in~\cite{olkiewicz99}. Here, we only consider the case where the reference state in the norm definition is the maximally mixed state. For this special case, the proof can be simplified.

\proof{Considering the function $p \mapsto \| f_{t(p)} \|_p = \tau(f^p_{t(p)})^{1/p}$ we can directly compute the derivative
\be\label{full-der}
	\frac{d}{dp} \ln \| f_{t(p)} \|_p = \frac{1}{p \| f_{t(p)} \|_p^p}\left( \frac{d}{dp} \| f_{t(p)} \|_p^p - \| f_{t(p)} \|_p^p \ln \| f_{t(p)} \|_p\right).
\ee
We now only need to evaluate the derivative $\frac{d}{dp} \|f_{t(p)}\|_p^p$. Note that we can write for any diagonalizable operator $f$ and any holomorphic function $h(z)$ that
\[
h(f) = \frac{1}{2\pi i} \int_{\partial \Delta} h(z) \frac{1}{z - f} dz, \;\; \mbox{where} \;\; \Delta \subset \C \:\; \mbox{so that} \;\; \mbox{spec}(f) \subset \Delta,
\]
by the Cauchy integral formula. In particular for  $h(z) = z^p$ and $f_{t(p)}$ positive definite, we can choose $\Delta$ to be supported entirely on the right half of the complex plane. When
$f_{t(p)}$ is positive semi-definite we can use a standard continuum argument by first perturbing $f$ so that it is positive definite and then taking the appropriate limit in the end. The derivative can now be evaluated as
\be\label{derivative}
	\frac{d}{dp} f_{t(p)}^p  = \frac{1}{2 \pi i} \int_{\partial \Delta} \left(z^p \ln z \frac{1}{z - f_{t(p)}} + z^p t'(p) \frac{d}{dt} \frac{1}{z - f_{t}} \right)dz.
\ee
If we choose the branch cut of the complex logarithm to be supported on the negative real axis the function $z^p \ln z$ is holomorphic on $\Delta$. Moreover, we can expand the matrix fraction by the standard formula as
\[
 \frac{1}{z - f_{t+ \epsilon}}  =  \frac{1}{z - f_{t}}  -  \epsilon \frac{1}{z - f_{t}}  {\cal L}(f_t)  \frac{1}{z - f_{t}} + {\cal O}(\epsilon^2),
\]
up to second order in $\epsilon >0$, from which we see that
\[
	 \frac{d}{dt} \frac{1}{z - f_{t}}= -  \frac{1}{z - f_{t}}  {\cal L}(f_t)  \frac{1}{z - f_{t}}.
\]
We now evaluate the full norm derivative. First applying Cauchy's integral formula for $z^p\ln z$, we then take the normalized trace $\tau$ on both sides of (\ref{derivative}). Using the cyclicity of the trace for the second summand we obtain
\be \label{norm-derivative}
\frac{d}{dp} \|f_{t(p)}\|_p^p  = \tau(f^p_{t(p)} \ln f_{t(p)})  - \frac{t'(p)}{2\pi i} \int_{\partial \Delta} \tau \left(\frac{z^p}{\left(z - f_{t(p)}\right)^2} {\cal L}(f_{t(p)})\right) dz.
\ee
Note that the second integral can be evaluated by considering the residuum at the second order pole so that
\[
 \frac{1}{2\pi i} \int_{\partial \Delta} \frac{z^p}{\left(z - f_{t(p)}\right)^2} dz = \mbox{Res}\left(\frac{z^p}{\left(z - f_{t(p)}\right)^2}\right) = pf^{p-1}_{t(p)}.
\]
Hence, if we now insert the derivative (\ref{norm-derivative}) into the full expression (\ref{full-der}), we obtain
\beas
\frac{d}{dp} \ln \| f_{t(p)} \|_p  &=&	\frac{1}{p^2 \|f_{t(p)}\|_p^p }\left(  \tau(f^p_{t(p)} \ln f^p_{t(p)}) -  \| f_{t(p)} \|_p^p \ln \| f_{t(p)} \|^p_p \right. \\\nonumber
					      &&       \left.  -p^2t'(p)\tau\left(f^{p-1}_{t(p)} {\cal L}(f_{t(p)})\right) \right),
\eeas
completing the proof. \qed}

Combining all these ingredients allows us to make a general statement about when 2-log-Sobolev inequalities can be lifted to $p$-log-Sobolev inequalities, and thence hypercontractive inequalities, for $0 \le p \le 2$.

\begin{thm}
\label{thm:logsob2p}
Let $\mathcal{T}$ be a unital, reversible and primitive quantum channel. Let ${\cal L} = c_0\left({\bf id} - \mathcal{T}\right)$, with $c_0 \in \R_+$, denote the generator of a reversible quantum semigroup $\mathcal{T}_t$ satisfying a 2-log-Sobolev inequality with constant $\alpha$.

Let $f \in \mathcal{M}_d$ be a positive semidefinite operator and let $-\infty < q \le p \le 1$ (if $p < 0$, assume that $f$ is positive definite). Then, for all $t$ such that $t \ge \frac{\alpha}{4} \ln ((1-q)/(1-p))$,
\[ \| \mathcal{T}_t(f) \|_q \ge \|f\|_p. \]
\end{thm}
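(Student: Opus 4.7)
My plan follows the classical Gross / Mossel--Oleszkiewicz--Sen strategy, now adapted via the quantum ingredients already assembled in the excerpt. I would parameterize a smooth strictly decreasing path $p\mapsto t(p)$ on $[q,p]$ with $t(p)=0$, $t(q)=T:=\tfrac{\alpha}{4}\ln\tfrac{1-q}{1-p}$, defined by the ODE
\[
t'(p)=\frac{\alpha}{4(p-1)}
\]
(which integrates to precisely this $T$), and then show that $p\mapsto\ln\|\mathcal{T}_{t(p)}(f)\|_p$ is non-increasing along the path.

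Feeding this choice of $t(p)$ into Lemma \ref{lem:derivative} and using $p^{2}t'(p)=\tfrac{\alpha p^{2}}{4(p-1)}$ yields
\[
\frac{d}{dp}\ln\|\mathcal{T}_{t(p)}(f)\|_p
=\frac{1}{p^{2}\,\tau(f_{t(p)}^{p})}\Bigl(\Ent(f_{t(p)}^{p})-\tfrac{\alpha p^{2}}{4(p-1)}\,\mathcal{E}_\mathcal{L}(f_{t(p)}^{p-1},f_{t(p)})\Bigr).
\]
The $p$-log-Sobolev inequality of Lemma \ref{lem:plogsob} says exactly that this bracket is $\le 0$; note that for $p<1$ the Dirichlet form $\mathcal{E}_\mathcal{L}(f^{p-1},f)$ is itself $\le 0$ (visible from the eigenvalue / $P_{ij}$ representation in the proof of Lemma \ref{lem:qstroock}), so both sides of the log-Sobolev are non-negative and the cancellation has the correct sign. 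Hence $\ln\|\mathcal{T}_{t(p)}(f)\|_p$ is non-increasing in $p$, and integrating from $p$ down to $q$ gives $\|\mathcal{T}_T(f)\|_q\ge\|f\|_p$. For $t\ge T$, expansivity (Lemma \ref{lem:expansion}) applied to the unital channel $\mathcal{T}_{t-T}$ on the operator $\mathcal{T}_T(f)\ge e^{-c_0 T}f$ (which inherits strict positivity from $f$ when $q<0$) lifts the bound to all $t\ge T$.

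The main obstacle is the range of exponents. Lemmas \ref{lem:qstroock} and \ref{lem:plogsob} are stated only for $p\in(0,2]\setminus\{1\}$, so the argument above covers the theorem cleanly only for $0<q\le p<1$. To reach $q\le 0$ (and possibly $p\le 0$) I would extend Lemma \ref{lem:qstroock} to all $p\ge q$ with $p,q\in\R\setminus\{0,1\}$: its proof only invokes the scalar two-point inequality
\[
pp'(x^{1/p}-y^{1/p})(x^{1/p'}-y^{1/p'})\le qq'(x^{1/q}-y^{1/q})(x^{1/q'}-y^{1/q'}),
\]
which Mossel, Oleszkiewicz and Sen establish in this wider range, after which the doubly-stochastic averaging against $\tfrac{1}{d}P_{ij}$ goes through unchanged. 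This yields the $p$-log-Sobolev for negative $p$ and the Gross-type argument above applies verbatim; the boundary cases $p=0$ (geometric mean) and $p=1$ are then recovered by continuity.
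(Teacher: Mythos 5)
The portion of your argument for $0<q\le p<1$ is essentially identical to the paper's first case: parameterize $t(p)=\tfrac{\alpha}{4}\ln\tfrac{1-q}{1-p}$, feed this into Lemma~\ref{lem:derivative}, and use Lemma~\ref{lem:plogsob} to conclude that $p\mapsto\ln\|\mathcal{T}_{t(p)}(f)\|_p$ is non-increasing. (Your use of Lemma~\ref{lem:expansion} for $t>T$ instead of the paper's norm monotonicity in $p$ is a fine variant.)

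The gap is in your treatment of $q\le 0$. You claim Lemma~\ref{lem:qstroock} extends to all $p\ge q$ in $\R\setminus\{0,1\}$ because the scalar two-point inequality
\[
pp'\bigl(x^{1/p}-y^{1/p}\bigr)\bigl(x^{1/p'}-y^{1/p'}\bigr)\le qq'\bigl(x^{1/q}-y^{1/q}\bigr)\bigl(x^{1/q'}-y^{1/q'}\bigr)
\]
holds in that wider range. It does not. Set $\psi(p)=pp'(x^{1/p}-y^{1/p})(x^{1/p'}-y^{1/p'})$. By the H\"older symmetry $\psi(p)=\psi(p')$, and a direct check (e.g.\ $x=2,\,y=1$) shows $\psi$ is decreasing on $(0,2]$ but \emph{increasing} on $(-\infty,0)$; for instance $\psi(-1)=\tfrac34>\psi(-5)\approx 0.697$. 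So for $-5<-1$, i.e.\ $p=-1\ge q=-5$, the inequality goes the \emph{wrong} way, and the proposed ``extended Lemma~\ref{lem:qstroock}'' is false. What \emph{is} true is only the special case you actually need, namely $\psi(2)\le\psi(p)$ for every $p\in\R\setminus\{0,1\}$ (since $p=2$ is the global minimizer of $\psi$), which would let you extend Corollary~\ref{cor:grossgen} rather than Lemma~\ref{lem:qstroock}. You did not state this restricted claim, and the unrestricted one is wrong.

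The paper sidesteps this entirely. It proves the Gross-type monotonicity only on $(0,p]$ (where Lemma~\ref{lem:plogsob} as stated applies), then handles $q<p<0$ by a \emph{duality} argument: reversibility of $\mathcal{T}_t$ plus the variational characterization in Lemma~\ref{lem:varchar} converts $\|\mathcal{T}_t(f)\|_q\ge\|f\|_p$ for $q<p<0$ into the already-proved statement for the H\"older conjugates $0<p'\le q'<1$, after noting $\tfrac{\alpha}{4}\ln\tfrac{1-q}{1-p}=\tfrac{\alpha}{4}\ln\tfrac{1-p'}{1-q'}$. The mixed case $q<0\le p$ is obtained by factoring $\mathcal{T}_t=\mathcal{T}_{t_1}\circ\mathcal{T}_{t_2}$ through the endpoint $p=0$. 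This avoids both the false extension and the degeneracy of Lemma~\ref{lem:derivative} at $p=0$, which your continuous path would have to pass through. I'd recommend replacing your final paragraph with the duality-plus-composition argument; alternatively, justify only the restricted $\psi(2)\le\psi(p)$ inequality and extend Corollary~\ref{cor:grossgen} (not Lemma~\ref{lem:qstroock}), and address the behavior of the derivative formula at $p=0$.
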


\begin{proof}
The proof is essentially the same as the classical proof in~\cite{mossel13a}. We split into three cases. First assume that $p > q \ge 0$. Consider a function
\[ t(q) = \frac{\alpha}{4} \ln \frac{1-q}{1-p} \]
defined on $(0,p]$. Then
\[ t(p) = 0,\;\;\;\; q^2 t'(q) = \frac{\alpha q^2}{4(q-1)}. \]
Now consider the map $q \mapsto \|\mathcal{T}_t(f)\|_q$. By Lemma \ref{lem:derivative},
\[ \frac{d}{dq} \ln \|\mathcal{T}_t(f) \|_q = \frac{\operatorname{Ent}(f_{t(q)}^q) - \frac{\alpha q^2}{4(q-1)} \mathcal{E}_{\mathcal{L}}(f^{q-1}_{t(q)},f_{t(q)})}{q^2 \tau(f^q_{t(q)})} \le 0, \]
where the inequality follows from Lemma \ref{lem:plogsob}. At the right-hand end of the interval the map evaluates to $\|f\|_p$. Therefore,
\[ \|\mathcal{T}_t(f)\|_q \ge \|f\|_p, \]
or in other words
\[ \|\mathcal{T}_{ \frac{\alpha}{4} \ln ((1-q)/(1-p))} (f)\|_q \ge \|f\|_p. \]
The generalization to $t \ge \frac{\alpha}{4} \ln((1-q)/(1-p))$ follows from monotonicity of norms by setting $t = \frac{\alpha}{4}\ln ((1-r)/(1-p))$ for some $r \le q$.

Second, assume that $q < 0 \le p$. Here we have
\[ \| \mathcal{T}_t(f) \|_q = \| \mathcal{T}_{t - \frac{\alpha}{4} \ln (1/(1-p))}(\mathcal{T}_{\frac{\alpha}{4} \ln (1/(1-p))}(f))\|_q \ge \|\mathcal{T}_{\frac{\alpha}{4} \ln (1/(1-p))}(f)\|_0 \ge \|f\|_p \]
using the first case and that $t \ge \frac{\alpha}{4} \ln ((1-q)/(1-p))$ implies $t - \frac{\alpha}{4} \ln (1/(1-p)) \ge \frac{\alpha}{4} \ln (1-q)$.

Third, the case $q < p<0$ is proven by a duality argument using Lemma \ref{lem:varchar}:
\beas
\| \mathcal{T}_t(f) \|_q &=& \inf \{ \tau(g \mathcal{T}_t(f)): g > 0, \|g\|_{q'} \ge 1 \}\\
&=& \inf \{ \tau(f \mathcal{T}_t(g)): g > 0, \|g\|_{q'} \ge 1 \}\\
&\ge& \inf \{ \tau(f h): h > 0, \|h\|_{p'} \ge 1 \}\\
&=& \|f\|_p,
\eeas
where we as usual define $p'$, $q'$ by $1/p + 1/p' = 1/q+1/q'=1$. The second equality follows from reversibility of $\mathcal{T}_t$, while the inequality holds because $t \ge \frac{\alpha}{4} \ln ((1-q)/(1-p)) = \frac{\alpha}{4} \ln ((1-p')/(1-q'))$, so for $h = \mathcal{T}_t(g)$, $\|h\|_{p'} \ge \|g\|_{q'}$. This completes the proof.
\end{proof}

The special case of the qubit depolarizing channel (Theorem \ref{thm:reversequantumdep}) is now simply a corollary.

\begin{repthm}{thm:reversequantumdep}
Let $f \in \mathcal{M}_{2^n}$ be a positive semidefinite operator and let $-\infty < q \le p \le 1$ (if $p < 0$, assume that $f$ is positive definite). Then, for all $\gamma$ such that $0 \le \gamma \le \sqrt{(1-p)/(1-q)}$,
\[ \| \mathcal{D}_\gamma^{\otimes n}(f) \|_q \ge \|f\|_p. \]
\end{repthm}

\begin{proof}
The assumptions of Lemmas \ref{lem:qstroock} to \ref{lem:plogsob} are all met by $n$ copies, $\mathcal{D}_\gamma^{\otimes n}$, of the single qubit depolarizing channel $\mathcal{D}_\gamma$. This can be seen by writing ${\cal L}_k(f) =  f  - \frac{1}{4}\sum_{\alpha = 0}^3 \sigma^\alpha_k f \sigma^\alpha_k$ for the generator of the semigroup acting on the $k$'th qubit, where $\sigma^\alpha$ are the Pauli matrices. Then the overall generator ${\cal L} = \sum_{k=1}^n {\cal L}_k$. In addition, it was shown in~\cite{kastoryano13} that the tensor product of qubit depolarizing channels satisfies a 2-log-Sobolev inequality with constant $\alpha=2$: $\operatorname{Ent}(f^2) \le 2 \mathcal{E}_\mathcal{L}(f,f)$. We can therefore apply Theorem \ref{thm:logsob2p} to the channel $\mathcal{D}_\gamma^{\otimes n}$ to obtain the claimed result.
\end{proof}


\section{Reverse H\"older and rapid mixing}

We can now collect some corollaries of hypercontractivity for tensor products of qubit depolarizing channels. We first observe a strengthened reverse H\"older inequality.

\begin{cor}
\label{cor:strongrevholder}
Let $f,g \in \mathcal{M}_{2^n}$ be positive semidefinite and let $-\infty < q, p \le 1$ (if $p < 0$, $f$ must be positive definite; if $q < 0$, $g$ must be positive definite). Then, for all $\gamma$ such that $0 \le \gamma \le \sqrt{(1-p)(1-q)}$,
\[ \tau(f \mathcal{D}_\gamma^{\otimes n}(g)) \ge \|f\|_p \|g\|_q. \]
\end{cor}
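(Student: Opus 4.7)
The plan is to combine the reverse H\"older inequality (Lemma~\ref{lem:revholder}) with the reverse hypercontractivity established in Theorem~\ref{thm:reversequantumdep}, using monotonicity of normalized Schatten norms in the exponent and the expansion property (Lemma~\ref{lem:expansion}) as a fallback when the parameters lie outside the reach of reverse hypercontractivity.

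First I would apply reverse H\"older to $f$ and $\mathcal{D}_\gamma^{\otimes n}(g)$ with H\"older-conjugate exponents $(p,p')$, where $p'=p/(p-1)$, to obtain
\[
  \tau\!\left(f\,\mathcal{D}_\gamma^{\otimes n}(g)\right) \;\ge\; \|f\|_p\,\bigl\|\mathcal{D}_\gamma^{\otimes n}(g)\bigr\|_{p'}.
\]
For $p\in[0,1]$ this is Lemma~\ref{lem:revholder} directly. For $p<0$ one has $p'\in(0,1)$, and the same lemma applies with the roles of $f$ and $\mathcal{D}_\gamma^{\otimes n}(g)$ swapped: positive definiteness of $f$ (required in the hypothesis when $p<0$) is used precisely here so that $\|f\|_p$ is meaningful and so that $f$ can play the ``$g$'' role in Lemma~\ref{lem:revholder}.

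Next I would lower bound $\|\mathcal{D}_\gamma^{\otimes n}(g)\|_{p'}$ by $\|g\|_q$ via a short case split. When $p'\le q$, Theorem~\ref{thm:reversequantumdep} applied to $g$ gives this bound as long as $\gamma\le\sqrt{(1-q)/(1-p')}$; using the algebraic identity $1-p'=1/(1-p)$, this rewrites to $\gamma\le\sqrt{(1-p)(1-q)}$, exactly the claimed range. When instead $p'>q$, monotonicity of the normalized Schatten norms in the exponent gives $\|\mathcal{D}_\gamma^{\otimes n}(g)\|_{p'}\ge\|\mathcal{D}_\gamma^{\otimes n}(g)\|_q$, and Lemma~\ref{lem:expansion} then yields $\|\mathcal{D}_\gamma^{\otimes n}(g)\|_q\ge\|g\|_q$; both steps are valid for every $\gamma\in[0,1]$. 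The key observation is that $p'>q$ is equivalent (by multiplying $p/(p-1)>q$ through by $p-1<0$ and rearranging) to $(1-p)(1-q)>1$; hence in this regime the hypothesis $\gamma\le\sqrt{(1-p)(1-q)}$ combined with the physical constraint $\gamma\le 1$ just reduces to $\gamma\le 1$, and the argument closes with no gap between the two regimes.

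The main obstacle is the case split itself, namely verifying the algebraic equivalence $p'\le q \iff (1-p)(1-q)\le 1$ and tracking which of reverse H\"older, reverse hypercontractivity, norm monotonicity, and expansion is used in each regime so that together they cover the full parameter range with matching constraints on $\gamma$. A minor technical point is the boundary $\gamma=1$, where $\mathcal{D}_1^{\otimes n}(g)=g$ may fail to be strictly positive when $q\ge 0$ does not force $g>0$; this is handled by the standard perturbation $g\mapsto g+\epsilon I$ followed by letting $\epsilon\downarrow 0$.
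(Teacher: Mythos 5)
Your proof is correct, and its core — reverse H\"older to split off $\|f\|_p$, followed by reverse hypercontractivity (Theorem~\ref{thm:reversequantumdep}) to convert $\|\mathcal{D}_\gamma^{\otimes n}(g)\|_{p'}$ into $\|g\|_q$ via the identity $1-p'=1/(1-p)$ — is exactly the paper's argument. The only organizational difference is how the ``slack'' when $\gamma<\sqrt{(1-p)(1-q)}$ (equivalently, when the conjugate exponent $p'$ may exceed $q$) is absorbed: the paper first replaces $q$ by the unique $r\ge q$ with $\gamma=\sqrt{(1-p)(1-r)}$ and applies monotonicity $\|g\|_r\ge\|g\|_q$ at the end, so that Theorem~\ref{thm:reversequantumdep} is always invoked with $p'\le r$; you instead keep $\gamma$ and $q$ fixed and case-split on the sign of $(1-p)(1-q)-1$, handling the $p'>q$ regime by norm monotonicity plus the expansion property (Lemma~\ref{lem:expansion}). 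Both routes are valid and cover the full range; your explicit verification of the equivalence $p'>q\iff(1-p)(1-q)>1$ and the remark about perturbing $g$ at $\gamma=1$ (where $\mathcal{D}_1^{\otimes n}(g)=g$ need not be strictly positive, as the reverse H\"older lemma requires) are small points of care that the paper leaves implicit.
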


\begin{proof}
First observe that it is sufficient to prove the claim for $\gamma = \sqrt{(1-p)(1-q)}$. Otherwise, set $\gamma = \sqrt{(1-p)(1-r)}$ for some $r \ge q$ and observe
\[ \tau(f \mathcal{D}_\gamma^{\otimes n}(g)) \ge \|f\|_p \|g\|_r \ge \|f\|_p \|g\|_q \]
by monotonicity. Assuming $\gamma = \sqrt{(1-p)(1-q)}$, let $p'$ satisfy $1/p + 1/p' = 1$ and use the reverse H\"older inequality to obtain
\[ \tau(f \mathcal{D}_\gamma^{\otimes n}(g)) \ge \|f\|_p \|\mathcal{D}_\gamma^{\otimes n}(g)\|_{p'}. \]
As $1/(1-p') = 1-p$, we have $\gamma = \sqrt{(1-p)(1-q)} = \sqrt{(1-q)/(1-p')}$. By the reverse hypercontractive inequality (Theorem \ref{thm:reversequantumdep}),
\[ \|\mathcal{D}_\gamma^{\otimes n}(g)\|_{p'} \ge \|g\|_q \]
as required to complete the proof.
\end{proof}

Following~\cite{mossel06}, we now use Corollary \ref{cor:strongrevholder} to prove the following theorem:

\begin{thm}
\label{thm:mixing}
Let $S$ be a subspace of $(\C^2)^{\otimes n}$ with corresponding projector $\Pi_S$, such that $\dim S = \exp(-s^2/2)2^n$ for some $s \ge 0$. Set $\rho_S = \Pi_S / \dim S$. Let $M \in \mathcal{M}_{2^n}$ satisfy $0 \le M \le I$, $\tau(M) = \exp(-t^2/2)$ for some $t \ge 0$. 
Then
\[ \tr(M \mathcal{D}_\gamma^{\otimes n}(\rho_S)) \ge \exp\left(-\frac{1}{2} \left(\frac{s^2+2\gamma s t + t^2}{1-\gamma^2} - s^2\right) \right). \]
\end{thm}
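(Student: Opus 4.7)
The plan is to reduce Theorem \ref{thm:mixing} to the strengthened reverse H\"older inequality of Corollary \ref{cor:strongrevholder} and then optimize over a one-parameter family.

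First I would rewrite the left-hand side in terms of the normalized trace $\tau$. Using $\rho_S=\Pi_S/\dim S$ and $\dim S=e^{-s^2/2}\,2^n$,
\[
\tr\!\bigl(M\,\mathcal{D}_\gamma^{\otimes n}(\rho_S)\bigr)=\frac{2^n}{\dim S}\,\tau\!\bigl(M\,\mathcal{D}_\gamma^{\otimes n}(\Pi_S)\bigr)=e^{s^2/2}\,\tau\!\bigl(M\,\mathcal{D}_\gamma^{\otimes n}(\Pi_S)\bigr).
\]
For any $p,q\in(0,1]$ with $\gamma=\sqrt{(1-p)(1-q)}$, Corollary \ref{cor:strongrevholder} applied with $f=M$, $g=\Pi_S$ gives
\[
\tau\!\bigl(M\,\mathcal{D}_\gamma^{\otimes n}(\Pi_S)\bigr)\;\ge\;\|M\|_p\,\|\Pi_S\|_q.
\]

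Next I would estimate the two quasi-norms. Since $\Pi_S$ is a projector, $\|\Pi_S\|_q=\tau(\Pi_S)^{1/q}=e^{-s^2/(2q)}$ exactly. For $M$, because $0\le M\le I$ and $p\in(0,1]$, the scalar inequality $x^p\ge x$ on $[0,1]$ together with the functional calculus yields the operator inequality $M^p\ge M$, hence $\tau(M^p)\ge\tau(M)=e^{-t^2/2}$; raising to the power $1/p\ge 1$ (both sides in $[0,1]$) gives $\|M\|_p\ge e^{-t^2/(2p)}$. Combining,
\[
\tr\!\bigl(M\,\mathcal{D}_\gamma^{\otimes n}(\rho_S)\bigr)\;\ge\;\exp\!\Bigl(\tfrac{s^2}{2}-\tfrac{t^2}{2p}-\tfrac{s^2}{2q}\Bigr).
\]

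It remains to optimize over $p,q\in(0,1]$ subject to the constraint $(1-p)(1-q)=\gamma^2$. Parametrising $a=1-p$ and $b=1-q=\gamma^2/a$ and differentiating $F(a)=t^2/(1-a)+s^2 a/(a-\gamma^2)$ in $a$, I would find the stationary point $a^\ast=\gamma(s+t\gamma)/(t+s\gamma)$, which lies in $(\gamma^2,1)$ for $s,t>0$ and $\gamma\in[0,1)$. Direct substitution then shows
\[
\frac{t^2}{p^\ast}+\frac{s^2}{q^\ast}=\frac{s^2+2\gamma st+t^2}{1-\gamma^2},
\]
which rearranges exactly to the claimed bound; the boundary cases $s=0$, $t=0$, or $\gamma\in\{0,1\}$ follow by taking limits or trivial direct arguments.

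The main substantive point, beyond the one-variable calculus, is the lower bound $\|M\|_p\ge\tau(M)^{1/p}$: for $p<1$ one cannot simply invoke Jensen's inequality in the usual direction, and it is the operator monotonicity of $x\mapsto x^p$ on $[0,1]$ that makes the bound go the right way. Everything else is bookkeeping.
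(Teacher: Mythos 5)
Your proof is correct and takes essentially the same route as the paper: rewrite in terms of $\Pi_S$ and the normalized trace, apply Corollary \ref{cor:strongrevholder}, bound $\|\Pi_S\|_q$ exactly and $\|M\|_p$ from below, then optimize over $(p,q)$ subject to $(1-p)(1-q)=\gamma^2$; the paper simply quotes the optimizing choice of $(p,q)$ from Mossel et al.\ rather than deriving it. Your explicit justification that $\|M\|_p\ge\tau(M)^{1/p}$ via $x^p\ge x$ on $[0,1]$ for $p\in(0,1]$ correctly spells out the step the paper only flags as ``we use $0\le M\le I$''.
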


\begin{proof}
The proof is effectively an immediate consequence of Corollary \ref{cor:strongrevholder}. First, we have
\[ \tr(M \mathcal{D}_\gamma^{\otimes n}(\rho_S)) = \frac{2^n}{\dim S} \tau(M \mathcal{D}_\gamma^{\otimes n}(\Pi_S)) = \exp(s^2/2) \tau(M \mathcal{D}_\gamma^{\otimes n}(\Pi_S)). \]
By Corollary \ref{cor:strongrevholder}, for any $p,q < 1$ such that $(1-p)(1-q) = \gamma^2$,
\[ \tau(M \mathcal{D}_\gamma^{\otimes n}(\Pi_S)) \ge \|\Pi_S\|_p \|M\|_q \ge \exp(-s^2/(2p)) \exp(-t^2/(2q)), \]
where we use $0 \le M \le I$ in the second inequality. As in~\cite[Theorem 3.4]{mossel06} we maximise the right-hand side by picking
\[ p = \frac{1-\gamma^2}{1+\gamma(t/s)},\;\;\;\; q = \frac{t}{s} \frac{1-\gamma^2}{\gamma + (t/s)}, \]
which yields
\[  \tr(M \mathcal{D}_\gamma^{\otimes n}(\rho_S)) \ge \exp\left(-\frac{1}{2} \left( \frac{s^2+2\gamma s t + t^2}{1-\gamma^2} - s^2\right) \right). \]
\end{proof}

Note that Theorem \ref{thm:mixing} holds for any quantum channels satisfying reverse hypercontractivity, not just tensor powers of the depolarizing channel. By fixing parameters in Theorem \ref{thm:mixing}, we obtain the following special case:

\begin{cor}
\label{cor:mixing}
Let $S$ be a subspace of $(\C^2)^{\otimes n}$ with corresponding projector $\Pi_S$, such that $\dim S = \sigma 2^n$ for some $\sigma$. Let $M \in \mathcal{M}_{2^n}$ satisfy $0 \le M \le I$, $\tau(M) = \sigma^\alpha$ for some $\alpha \ge 0$. Set $\rho_S = \Pi_S / \dim S$. Then
\[ \tr(M \mathcal{D}_\gamma^{\otimes n}(\rho_S)) \ge \sigma^{(\sqrt{\alpha}+\gamma)^2/(1-\gamma^2)}. \]
In the special case $\alpha = 1$, this is at least $\sigma^{(1+\gamma)/(1-\gamma)}$.
\end{cor}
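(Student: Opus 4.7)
The plan is to derive Corollary \ref{cor:mixing} as a direct specialization of Theorem \ref{thm:mixing} via the substitutions $\sigma = \exp(-s^2/2)$ and $\sigma^\alpha = \exp(-t^2/2)$.

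First, I would identify the parameters $s$ and $t$ needed to apply Theorem \ref{thm:mixing}. Given $\dim S = \sigma 2^n$, setting $s = \sqrt{-2 \ln \sigma}$ yields $\dim S = \exp(-s^2/2) 2^n$, matching the theorem's hypothesis (assuming $\sigma \leq 1$, which is implicit since $\dim S \leq 2^n$). Similarly, $\tau(M) = \sigma^\alpha$ together with $t = \sqrt{-2\alpha \ln \sigma} = \sqrt{\alpha}\, s$ gives $\tau(M) = \exp(-t^2/2)$. Both parameters are nonnegative, so Theorem \ref{thm:mixing} applies.

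Next, I would substitute $t = \sqrt{\alpha}\, s$ into the bound supplied by Theorem \ref{thm:mixing}. The numerator of the exponent becomes $s^2(1 + 2\gamma \sqrt{\alpha} + \alpha)$. The key algebraic simplification is the identity
\[
1 + 2 \gamma \sqrt{\alpha} + \alpha \;=\; (\sqrt{\alpha}+\gamma)^2 + (1-\gamma^2),
\]
which lets the bracketed quantity in Theorem \ref{thm:mixing} collapse:
\[
\frac{s^2 + 2\gamma s t + t^2}{1 - \gamma^2} - s^2 \;=\; s^2 \left(\frac{(\sqrt{\alpha}+\gamma)^2 + (1-\gamma^2)}{1-\gamma^2} - 1 \right) \;=\; \frac{s^2(\sqrt{\alpha}+\gamma)^2}{1-\gamma^2}.
\]
Using $s^2 = -2\ln\sigma$, the exponential then reduces to $\sigma^{(\sqrt{\alpha}+\gamma)^2/(1-\gamma^2)}$, which is exactly the claimed inequality.

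Finally, for the $\alpha = 1$ case, I would observe the cancellation $(1+\gamma)^2/(1-\gamma^2) = (1+\gamma)/(1-\gamma)$, giving the stated simplified exponent. There is essentially no obstacle here, since all nontrivial work has been absorbed into Theorem \ref{thm:mixing}; the only point requiring a moment's care is verifying the completing-the-square identity above, and checking the edge case $\sigma = 1$ (where both sides equal $1$) so that $\ln \sigma$ is well defined for the substitution.
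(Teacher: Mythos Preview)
Your proposal is correct and follows exactly the approach the paper intends: the paper presents Corollary~\ref{cor:mixing} as obtained ``by fixing parameters in Theorem~\ref{thm:mixing}'' without further detail, and you have carried out precisely that specialization with the substitutions $s=\sqrt{-2\ln\sigma}$, $t=\sqrt{\alpha}\,s$ and the completing-the-square identity. There is nothing to add.
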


Let us examine what this corollary is saying. Write $\rho_\gamma = \mathcal{D}_\gamma^{\otimes n}(\rho_S)$. If $\gamma = 0$, $\rho_\gamma$ is maximally mixed, so $\tr(M \rho_\gamma) = \sigma^\alpha$. Corollary \ref{cor:mixing} matches this. More generally, imagine $\sigma$, $\alpha$ and $\gamma$ are fixed constants. Then Corollary \ref{cor:mixing} states that $\tr(M \rho_\gamma)$ is also lower-bounded by a constant, independent of dimension. This is a kind of ``rapid mixing'' result for tensor products of qubit depolarizing channels: if we start with a random state picked from $S$ and apply depolarizing noise, the resulting state is quite likely to be accepted by the measurement operator $M$, even if, for example, $M$ is the projector onto a subspace orthogonal to $S$.


\section{Non-interactive correlation distillation}

We now apply Theorem \ref{thm:reversequantumdep} to a quantum generalization of the problem of non-interactive correlation distillation~\cite{mossel05,mossel06,mossel13a}. Classically, this problem can be defined in terms of a game involving $k$ players. A string of $n$ uniformly random bits is produced and a copy of the $n$-bit string is distributed to each of the players. Each bit in each of the copies is subject to independent noise, being flipped with probability $\epsilon = (1-\gamma)/2$. Each of the players applies the same boolean function $f:\{0,1\}^n \rightarrow \{0,1\}$ to their bit-string. Their aim is that the $k$ output values are all equal to some $y \in \{0,1\}$ such that $y$ is uniformly random.

The following bound on the success probability was shown by Mossel et al.~\cite{mossel06} using reverse hypercontractivity:

\begin{thm}[Mossel et al.~\cite{mossel06}]
\label{thm:nicdlowerclass}
For any function $f$, and any noise rate $0 \le \gamma \le 1$, the probability that all players output 0 (or 1) is at most
\[ O\left( \left( \frac{e^{c \sqrt{\ln k}} }{k} \right)^{1/\gamma^2 - 1} \right) \]
for some universal constant $c$.
\end{thm}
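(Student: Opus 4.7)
The plan is to follow the strategy of Mossel et al.~\cite{mossel06}: use the reverse Bonami--Beckner inequality (Theorem~\ref{thm:reverseclassical}) to control the ``near-saturation'' level set of the noise-averaged indicator of $f^{-1}(0)$, then combine with a simple two-term decomposition.

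Write $A=f^{-1}(0)$ and $\mu=|A|/2^n$; by replacing $f$ with $1-f$ and doubling the bound, we may assume $\mu\le 1/2$, and it suffices to control $P:=\Pr[\text{all players output }0]$. Conditioning on the common string $x$ and using that the $k$ noisy copies are independent given $x$,
\[
P=\E_x\bigl[(T_\gamma \mathbf{1}_A(x))^k\bigr]=\|g\|_k^k, \qquad g:=T_\gamma \mathbf{1}_A\in[0,1].
\]
For any $\epsilon\in(0,1)$, split the expectation according to whether $g\le 1-\epsilon$:
\[
P\le(1-\epsilon)^k+\Pr[g>1-\epsilon].
\]
The main task is to bound $\Pr[g>1-\epsilon]$, for which reverse hypercontractivity will be essential.

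Since $T_\gamma \mathbf{1}=1$, we have $g=1-T_\gamma \mathbf{1}_{\bar A}$, so $\{g>1-\epsilon\}=\{T_\gamma \mathbf{1}_{\bar A}<\epsilon\}$. Apply Theorem~\ref{thm:reverseclassical} to $\mathbf{1}_{\bar A}$ with parameters $q<0<p$ satisfying $\gamma^2=(1-p)/(1-q)$ to get $\|T_\gamma \mathbf{1}_{\bar A}\|_q\ge(1-\mu)^{1/p}$. Raising to the $q$th power (reversing the inequality since $q<0$) gives $\E[(T_\gamma \mathbf{1}_{\bar A})^q]\le(1-\mu)^{q/p}$. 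On the event $\{T_\gamma \mathbf{1}_{\bar A}<\epsilon\}$ we have $(T_\gamma \mathbf{1}_{\bar A})^q>\epsilon^q$, and elsewhere $(T_\gamma \mathbf{1}_{\bar A})^q\ge 1$, so
\[
1+(\epsilon^q-1)\Pr[g>1-\epsilon]\le\E[(T_\gamma \mathbf{1}_{\bar A})^q]\le(1-\mu)^{q/p},
\]
yielding $\Pr[g>1-\epsilon]\lesssim(1-\mu)^{q/p}\,\epsilon^{|q|}$.

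The constraint $\gamma^2=(1-p)/(1-q)$ with $p>0$ forces $|q|<1/\gamma^2-1$; writing $p=\gamma^2\delta$ gives $|q|=1/\gamma^2-1-\delta$, while the prefactor obeys $(1-\mu)^{q/p}\le 2^{|q|/p}=\exp(\Theta(1/\delta))$ for $\mu\le 1/2$. Combining the two ingredients,
\[
P\lesssim e^{-\epsilon k}+\exp(\Theta(1/\delta))\,\epsilon^{1/\gamma^2-1-\delta}.
\]
Taking $\epsilon\asymp(\ln k)/k$ balances the two terms up to the prefactor, and tuning $\delta\asymp 1/\sqrt{\ln k}$ converts the blow-up $\exp(1/\delta)$ and the loss $k^{\delta}$ in the exponent into a common multiplicative factor $e^{c\sqrt{\ln k}}$, producing the claimed bound $O\bigl((e^{c\sqrt{\ln k}}/k)^{1/\gamma^2-1}\bigr)$.

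The main obstacle is precisely this last optimization: the sharp exponent $1/\gamma^2-1$ is only attained in the limit $p\to 0^+$, but the prefactor $(1-\mu)^{q/p}$ diverges exponentially in $1/p$. The subpolynomial $e^{c\sqrt{\ln k}}$ factor in the theorem is exactly the cost of this tradeoff; everything else is a fairly mechanical application of reverse hypercontractivity and pointwise estimates.
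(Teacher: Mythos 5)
Your proof is correct, and it follows the same overall template as the argument the paper uses for its quantum analogue (the unnamed theorem immediately following this one, which mirrors~\cite{mossel06}): isolate the level set where the noisy indicator of $A=f^{-1}(0)$ is close to $1$, control that set's measure via reverse hypercontractivity, and then balance against the easy bound $(1-\epsilon)^k$ by tuning two parameters. The point of difference is the hypercontractive step. The paper (following~\cite{mossel06}) goes through the two-function ``strong reverse H\"older'' route: it first proves $\tau(f\,T_\gamma g)\ge\|f\|_p\|g\|_q$ (Corollary~\ref{cor:strongrevholder}), upgrades this to the two-set correlation bound of Theorem~\ref{thm:mixing}/Corollary~\ref{cor:mixing}, and then runs a Markov-plus-contradiction argument: assume success $\ge 2\delta$, define $S$ as the level set, upper-bound $\E_{x\sim S}[T_\gamma\mathbf{1}_{\bar A}]$ by $1-\delta^{1/k}$ from the definition of $S$, and lower-bound the same quantity by Corollary~\ref{cor:mixing}. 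You instead apply the one-function form $\|T_\gamma\mathbf{1}_{\bar A}\|_q\ge\|\mathbf{1}_{\bar A}\|_p$ directly with $q<0$, and deduce the small-ball bound $\Pr[T_\gamma\mathbf{1}_{\bar A}<\epsilon]\lesssim(1-\mu)^{q/p}\epsilon^{|q|}$ by Markov's inequality on the negative moment; no reverse H\"older and no contradiction structure are needed. Both routes yield the same limiting exponent $|q|\to 1/\gamma^2-1$ and the same $e^{\Theta(\sqrt{\ln k})}$ loss from the tradeoff between driving $p\to 0^+$ and controlling $(1-\mu)^{q/p}\le 2^{|q|/p}$; your negative-moment argument is a little more self-contained, while the paper's two-set corollary is a reusable intermediate that also gives Theorem~\ref{thm:mixing}. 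One small caveat: reducing to $\mu\le 1/2$ bounds only the probability that all players output the \emph{minority} value, so ``doubling the bound'' does not literally give both $\Pr[\text{all }0]$ and $\Pr[\text{all }1]$ for an unbalanced $f$; this is the standard reading of the statement and not a defect of your argument, but it is worth stating precisely.
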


The success probability is thus exponentially small in $k$ for any constant noise rate $\gamma$. Theorem \ref{thm:nicdlowerclass} is close to optimal: if $f$ is the majority function, for large enough $n$ the probability that all players output the same value is $\Omega(k^{1-1/\gamma^2})$~\cite{mossel06}.

We now observe that Theorem \ref{thm:reversequantumdep} allows us to prove a related and more general result in the quantum setting. In the quantum generalization, the players first fix an orthonormal basis $\mathcal{B}$ for $(\C^2)^{\otimes n}$. A state $\ket{\psi} \in (\C^2)^{\otimes n}$ is chosen uniformly at random from $\mathcal{B}$ by a referee. Each of the $k$ players receives a copy of $\ket{\psi}$, with independent depolarizing noise with parameter $\gamma$ applied to each qubit of each copy of $\ket{\psi}$. Each player then applies a ``balanced'' two-outcome measurement $\{M,I-M\}$ to their state, where $0 \le M \le I$ is a positive semidefinite operator such that $\tau(M) = \frac{1}{2}$. It is natural to demand this notion of balance, as even in the noiseless case ($\gamma=1$) this is necessary in order to obtain equiprobable measurement outcomes, and the players do not necessarily know the noise parameter. As in the classical setting, the goal is for the measurement outcomes to be random and perfectly correlated. That is, either every player should receive the measurement outcome corresponding to $M$ or every player should receive the outcome corresponding to $I-M$, with equal probability of each.

This is a generalization of the classical framework: if the players choose $\mathcal{B}$ to be a product basis for $(\C^2)^{\otimes n}$, the game behaves equivalently to classical non-interactive correlation distillation. In principle, it could be possible for the players to do better by letting $\mathcal{B}$ be a basis of entangled states. However, we show here that an equivalent bound to Theorem \ref{thm:nicdlowerclass} can be proven for this more general task.

\begin{thm}
For any balanced measurement $\{M, I-M\}$, and any noise rate $0 \le \gamma \le 1$, the probability that all players receive outcome $M$ (or $I-M$) is at most
\[ O\left( \left( \frac{e^{c \sqrt{\ln k}} }{k} \right)^{1/\gamma^2 - 1} \right) \]
for some universal constant $c$.
\end{thm}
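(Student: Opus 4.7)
The plan is to mirror the classical proof of Theorem \ref{thm:nicdlowerclass} by Mossel et al.~\cite{mossel06}, with the quantum reverse hypercontractive inequality (Theorem \ref{thm:reversequantumdep}) playing the role of its classical counterpart. The enabling observation is that, because every player applies the same measurement $M$ to an independently noised copy of the same state $\ket\psi$, the $k$-player success probability depends only on the scalar $v_\psi := \bra\psi \mathcal{D}_\gamma^{\otimes n}(M)\ket\psi \in [0,1]$.

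First, I would reduce the problem to a scalar moment bound. Setting $N = \mathcal{D}_\gamma^{\otimes n}(M)$ and using reversibility of the depolarizing channel to move the noise from the state onto the measurement, the probability that all players obtain outcome $M$ becomes
\[
P \;=\; \frac{1}{2^n}\sum_{\ket\psi \in \mathcal{B}} (\bra\psi N \ket\psi)^k \;=\; \mathbb{E}_\psi[v_\psi^k],
\]
where the expectation is over $\ket\psi$ drawn uniformly from $\mathcal{B}$. Since $\mathcal{B}$ is an orthonormal basis, $0 \le N \le I$, and $\mathbb{E}_\psi[v_\psi] = \tau(N) = \tau(M) = 1/2$. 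This puts us in the same setup as the classical case: bounding the $k$-th moment of a random variable in $[0,1]$ of mean $1/2$, where the random variable has additional structure coming from being the diagonal of a depolarized operator.

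Next, I would bridge between scalar moment inequalities for $v_\psi$ and operator-level statements about $N$ using Schur--Horn majorization: for any convex $\phi$, $\sum_\psi \phi(v_\psi) \le \tr \phi(N)$. The classical proof of~\cite{mossel06} controls $P$ via a threshold-splitting argument $P \le t^k + \Pr_\psi[v_\psi > t]$, where the tail is bounded using moments of the noisy indicator. In the quantum setting, the analogous moments $\tau(N^q)$ are controlled by Theorem \ref{thm:reversequantumdep}: for $q \le p \le 1$ with $\gamma^2 \le (1-p)/(1-q)$, one has $\|N\|_q \ge \|M\|_p$. Combined with $\|M\|_p \ge (1/2)^{1/p}$ for $0 < p \le 1$ (which follows because $M^p \ge M$ when $0 \le M \le I$ and $\tau(M) = 1/2$), this gives exactly the reverse-HC norm bounds that Mossel et al.\ use in the classical argument. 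The majorization step then transfers these bounds into the scalar tail estimate on $v_\psi$ required for the threshold split.

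Finally, I would optimize the free parameters $p, q, t$ (subject to the reverse-HC constraint) exactly as in~\cite{mossel06}, balancing the two terms $t^k$ and $\Pr_\psi[v_\psi > t]$ to obtain the stated bound $O\bigl((e^{c\sqrt{\ln k}}/k)^{1/\gamma^2-1}\bigr)$; the sub-polynomial correction $e^{c\sqrt{\ln k}}$ arises from the same parameter tradeoff as in the classical case.

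The main obstacle is not the high-level structure but the translation from operator inequalities on $N$ (which is what reverse hypercontractivity delivers) to scalar tail estimates on $v_\psi$ that are tight enough to recover the sharp $k$-dependence; the Schur--Horn step is lossless when $\mathcal{B}$ coincides with the eigenbasis of $N$ but potentially lossy when $\mathcal{B}$ is chosen adversarially as an entangled basis, and verifying it does not degrade the final exponent is the key technical point. A secondary issue is the positivity hypothesis in Theorem \ref{thm:reversequantumdep} when $p$ or $q$ is negative, which I would handle by the standard perturbation $M \to M + \epsilon I$ followed by $\epsilon \to 0$.
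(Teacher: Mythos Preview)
Your proposal differs from the paper's proof in a crucial way, and the difference exposes a real gap.

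The paper never applies reverse hypercontractivity to $M$ (or to $N = \mathcal{D}_\gamma^{\otimes n}(M)$). Instead it follows the classical Mossel et al.\ argument verbatim: assuming $P \ge 2\delta$, it forms the projector $\Pi_S$ onto the span of those basis vectors $\ket\psi \in \mathcal{B}$ with $v_\psi \ge \delta^{1/k}$, notes that $\dim S \ge \delta\, 2^n$ by a reverse-Markov step, and then applies the rapid-mixing bound (Corollary~\ref{cor:mixing}, itself an immediate consequence of Theorem~\ref{thm:reversequantumdep}) to the pair $(\Pi_S,\, I-M)$. This produces a \emph{lower} bound on $\tr[(I-M)\,\mathcal{D}_\gamma^{\otimes n}(\rho_S)]$, which is set against the trivial \emph{upper} bound $1-\delta^{1/k}$ coming from the definition of $S$; optimizing gives the claimed $k$-dependence. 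No Schur--Horn step is needed, because $\Pi_S$ is built directly out of the basis $\mathcal{B}$, so the ``entangled basis'' issue you worry about does not arise.

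Your route, by contrast, applies reverse HC to $M$ and tries to feed $\|N\|_q \ge \|M\|_p$ through majorization into a tail bound on $v_\psi$. But this inequality is a \emph{lower} bound on $\|N\|_q$; for $0<q<1$, Schur--Horn (with concave $x\mapsto x^q$) then gives $\mathbb{E}_\psi[v_\psi^q] \ge \tau(N^q)$, again a lower bound. Neither of these produces the \emph{upper} bound on $\Pr_\psi[v_\psi > t]$ that the threshold split $P \le t^k + \Pr_\psi[v_\psi > t]$ requires. The obstacle you flag (lossiness of Schur--Horn for adversarial $\mathcal{B}$) is secondary; the primary problem is that the inequalities point the wrong way. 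One can try to repair this by working with the complementary variable $1-v_\psi$ and negative $q$, but at that point you are essentially rebuilding the $(\Pi_S,\, I-M)$ argument, and it is cleaner to invoke Corollary~\ref{cor:mixing} on the level-set projector directly, as the paper does.
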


\begin{proof}
The proof strategy of~\cite[Theorem 3.1]{mossel06} for the classical setting goes through with only minor changes. Assume that the expected probability that all the players output 0 is at least $2\delta$, i.e.
\[ \E_{\ket{\psi}}[(\tr M \mathcal{D}_\gamma^{\otimes n}(\proj{\psi}) )^k] \ge 2\delta, \]
where the expectation is over the uniformly random choice of $\ket{\psi}$ from $\mathcal{B}$. Let $S = \spann\{ \ket{\psi}: (\tr M \mathcal{D}_\gamma^{\otimes n}(\proj{\psi}) )^k \ge \delta \}$ and set $\sigma = (\dim S)/2^n$. Using a similar argument to the proof of Markov's inequality, we have $\Pr_{\ket{\psi}}[(\tr M \mathcal{D}_\gamma^{\otimes n}(\proj{\psi}) )^k \ge \delta] \ge \delta$, and hence $\sigma \ge \delta$. By the definition of $S$, for any $\ket{\phi} \in S$
\be \label{eq:ineq1} \tr[(I-M) \mathcal{D}_\gamma^{\otimes n}(\proj{\phi})]  \le 1 - \delta^{1/k}, \ee
so if we write $\rho_S = \Pi_S / \dim S$, $\tr[(I-M) \mathcal{D}_\gamma^{\otimes n}(\rho_S)] \le 1-\delta^{1/k}$.

But we can also apply Corollary \ref{cor:mixing} to $S$ and $I-M$, where $\tau(M) = \frac{1}{2}$ and hence $\alpha = 1/(\log_2 1/\sigma)$. This implies that
\[ \tr[((I-M) \mathcal{D}_\gamma^{\otimes n}(\rho_S)] \ge \sigma^{(1/\sqrt{\log_2 1/\sigma}+\gamma)^2/(1-\gamma^2)}. \]
For any $0 \le \gamma \le 1$, the right-hand side is an increasing function of $\delta$, and we therefore have the bound
\be \label{eq:ineq2} \tr[(I-M) \mathcal{D}_\gamma^{\otimes n}(\rho_S)] \ge \delta^{(1/\sqrt{\log_2 1/\delta}+\gamma)^2/(1-\gamma^2)} \ge \delta^{(1/\sqrt{\ln 1/\delta}+\gamma)^2/(1-\gamma^2)}. \ee
The rest of the proof follows by combining inequalities (\ref{eq:ineq1}) and (\ref{eq:ineq2}) to upper-bound $\delta$, exactly as in~\cite{mossel06}. Write $\nu = 1/\gamma^2 - 1$. We will show that, if $\delta \ge (e^{c \sqrt{\ln k}}/k)^\nu$, for a sufficiently large universal constant $c$, then $\delta^{1/k} + \delta^{(1/\sqrt{\ln 1/\delta}+\gamma)^2/(1-\gamma^2)} > 1$ for large $k$, violating the combination of (\ref{eq:ineq1}) and (\ref{eq:ineq2}). First we have
\[ \delta^{1/k} > \left(\frac{1}{k} \right)^{\nu/k} = e^{-(\nu \ln k)/k} > 1 - \frac{\nu \ln k}{k}. \]
We can also expand
\[ \delta^{(1/\sqrt{\ln 1/\delta}+\gamma)^2/(1-\gamma^2)} = \delta^{(\ln^{-1} 1/\delta)/(1-\gamma^2)} \delta^{2\gamma(\ln^{-1/2} 1/\delta)/(1-\gamma^2)} \delta^{\gamma^2/(1-\gamma^2)} \]
and bound these three parts as
\beas
\delta^{(\ln^{-1} 1/\delta)/(1-\gamma^2)} &=& e^{-1/(1-\gamma^2)},\\
\delta^{2\gamma(\ln^{-1/2} 1/\delta)/(1-\gamma^2)} &=& e^{-2\gamma\sqrt{\ln 1/\delta}/(1-\gamma^2)} \ge e^{-2\gamma\sqrt{\nu \ln k}/(1-\gamma^2)},\\
\delta^{\gamma^2/(1-\gamma^2)} &=& \delta^{1/\nu} \ge \frac{e^{c \sqrt{\ln k}}}{k}.\\
\eeas
For sufficiently large $c$ (where $c$ depends on $\gamma$, but not on $k$), and large enough $k$, the product of these three terms is larger than $(\nu \ln k) / k$. This implies the desired contradiction and completes the proof.
\end{proof}

We remark that, classically, reverse hypercontractivity has also been applied to noninteractive distillation of correlations in a more general setting~\cite{kamath12}. Delgosha and Beigi~\cite{delgosha14} have shown that ``standard'' quantum hypercontractivity can be used to put limits on correlation distillation, via a quantum generalization of the notion of the hypercontractivity ribbon of Ahlswede and G\'acs~\cite{ahlswede76}. It seems likely that the quantum reverse hypercontractive inequality could be applied in a similar way to improve these results, but we do not pursue this further here.


\subsection*{Acknowledgements}

This work was initiated at the BIRS workshop 15w5098, ``Hypercontractivity and Log Sobolev Inequalities in Quantum Information Theory''. We would like to thank BIRS and the Banff Centre for their hospitality. We would also like to thank Mark Wilde for pointing out reference~\cite{tomamichel14}, and an anonymous referee for helpful comments. MK was supported by the Carlsbergfond and the Villum foundation. AM was supported by the UK EPSRC under Early Career Fellowship EP/L021005/1.
K.T. was supported by the Institute for Quantum Information and Matter, a NSF Physics Frontiers Center with support of the Gordon and Betty Moore Foundation (Grants No. PHY-0803371 and PHY-1125565). TSC is supported by the Royal Society.



\end{document}